\newcommand\myitem[1][]{\item[#1]\def\@currentlabel{#1}}
\newcommand{\myparagraph}[1]{%
  % \paragraph{#1}%
  \section{#1}
  % \addcontentsline{toc}{section}{#1} % Add paragraph to TOC as a section-level entry
}
\DeclarePairedDelimiter{\floor}{\lfloor}{\rfloor}
\definecolor{darkpurple}{rgb}{0.5, 0.2, 0.8}
\definecolor{darkblue}{rgb}{0.0, 0.0, 0.8}
\definecolor{darkgreen}{rgb}{0.0, 0.4, 0.0}
\definecolor{darkred}{rgb}{0.5, 0.0, 0.0}
\newtheorem{lemma}{Lemma}
\newtheorem{theorem}{Theorem}
\newcommand{\Li}{{\mathrm{Li}}}
\def\S{\mathcal{S}}
\def\I{\mathcal{I}}
\def\D{\mathrm{D}}
\def\L{\mathrm{L}}
\def\E{\mathrm{E}}
\def\d{d}
\begin{document}
\setlength{\parskip}{0pt}

\title{The Landau Bootstrap}

\author{Holmfridur~S.~Hannesdottir}%
\affiliation{Institute for Advanced Study, Einstein Drive, Princeton, NJ 08540, USA}
\author{Andrew~J.~McLeod}%
\affiliation{Higgs Centre for Theoretical Physics, School of Physics and Astronomy, \\ The University of Edinburgh, Edinburgh EH9 3FD, Scotland, UK}
\author{Matthew~D.~Schwartz}%
\affiliation{Department of Physics, Harvard University, Cambridge, MA 02138, USA}
\author{Cristian~Vergu}%
\affiliation{Institute for Gravitation and the Cosmos, Department of Physics, \\ Pennsylvania State University, University Park, Pennsylvania 16802, USA}

\begin{abstract}
We advocate a strategy of bootstrapping Feynman integrals from just knowledge of their singular behavior. This approach is complementary
to other bootstrap programs, which exploit non-perturbative constraints such as unitarity, or amplitude-level constraints such as gauge invariance. We begin by studying where a Feynman integral can become singular, and the behavior it exhibits near these singularities. We then characterize the space of functions that we expect the integral to evaluate to, in order to formulate an appropriate ansatz. Finally, we derive constraints on where each singularity can appear in this ansatz, and use information about the expansion of the integral around singular points in order to determine the value of all remaining free coefficients. Throughout, we highlight how constraints that have previously only been derived for integrals with generic masses can be extended to integrals involving particles of equal or vanishing mass. We illustrate the effectiveness of this approach by bootstrapping a number of examples, including the four-point double box with a massive internal loop. 

\end{abstract}

\maketitle
The computation of scattering amplitudes in perturbation theory has turned out to be an extraordinarily challenging task. Despite the need for increasingly-precise theory predictions at the LHC~\cite{Huss_2023}, our ability to compute the scattering amplitudes needed to make these predictions remains severely limited, especially for processes that involve multiple heavy species of particles. Evaluating these amplitudes using traditional methods may ultimately prove to be an insurmountable challenge. However, there remains hope that a more indirect approach may still be possible, in which the functional form of amplitudes are bootstrapped directly from their expected mathematical properties. The feasibility of such an approach has greatly increased in recent years, during which rapid progress has been made understanding the types of mathematical structure exhibited by perturbative amplitudes---from the classes of numbers and special functions they evaluate to~\cite{Broadhurst:1998rz,Bogner:2007mn,Bourjaily:2022bwx}, to the ways in which their analytic structure is constrained by basic physical principles~\cite{Landau:1959fi,ELOP,Steinmann,Steinmann2,pham} and the surprising discovery of new number-theoretic symmetries~\cite{Arkani-Hamed:2012zlh,Schlotterer:2012ny,Brown:2015fyf,Panzer:2016snt,Schnetz:2017bko,Caron-Huot:2019bsq,Gurdogan:2020ppd,Dixon:2021tdw,Dixon:2022xqh}. Indeed, a number of nontrivial amplitudes and Feynman integrals have already been determined just from knowledge of these types of mathematical properties (see for instance~\cite{Caron-Huot:2018dsv,Caron-Huot:2020bkp,Morales:2022csr,Dixon:2022rse}), thereby sidestepping the difficult integration problems that have historically plagued amplitude computations. 

In this work, we propose a new bootstrap strategy for computing individual Feynman integrals---the universal building blocks that enter perturbative computations---from just knowledge of the location and nature of their singularities.\footnote{For previous work on bootstrapping individual Feynman integrals, see~\cite{Chicherin:2017dob,Caron-Huot:2018dsv,Henn:2018cdp,He:2021fwf,He:2021eec}.} We refer to this new approach as the {\bf{Landau bootstrap}}. 
In particular, the strategy we propose can be conceptually broken down into seven steps:
\begin{enumerate}
\itemsep.5em 
    \myitem[(i)]\label{step_sing} Identify the location of potential \textbf{singularities} in the space of external kinematics.
    \myitem[(ii)]\label{step_local} Determine the \textbf{local nature} of the integral around these singular points---for instance, discerning whether it develops a pole or a branch cut at a given singular point, by studying the expansion of the integral around this point.
    \item[(iii)] Probe the types of special functions and algebraic prefactors that can appear in the integral, through the computation of \textbf{leading singularities}.
    \item[(iv)] Parametrize the \textbf{space of functions} that the Feynman integral is expected to evaluate to.
    \myitem[(v)]\label{step_global} Determine the \textbf{global nature} of the singularities that appear in the integral, for example determining whether they give rise to branch cuts on the physical sheet.
    \myitem[(vi)]\label{step_rel} Derive constraints on the \textbf{sequential discontinuities} that can appear in the integral.
    \item[(vii)] In the space of functions that remains, find the \textbf{unique function} that is consistent with all expected properties of the Feynman integral.
\end{enumerate}
As we will see, this approach will largely be facilitated by
the types of constraints derived in step~\ref{step_rel}, which go back over sixty years~\cite{Steinmann,Steinmann2,pham,landshoff1966hierarchical} but whose power is only now beginning to be appreciated. In what follows, we discuss each of these steps in more detail, while summarizing known results that hold for integrals involving generic masses and illustrating how these results can be extended to integrals that involve massless particles and particles of equal mass.

\myparagraph{\bf Singularities of Feynman Integrals}
To every Feynman diagram $G$, we can associate an integral of the form
\begin{equation}
    \I (p_i,m_e) = \int \prod_{j=1}^\L \d^\D k_j \frac{N(p_i,k_j)}{\prod_{e=1}^\E (q_e^2-m_e^2+i\varepsilon)} \,,
    \label{eq:feynint_def}
\end{equation}
where $\L$ and $\E$ denote the number of loops and edges in $G$, and $m_e$ and $q_e$ are the mass and momentum flowing through edge $e$ of the graph. The momenta $q_e$ depend linearly on the external momenta $p_i$ and the loop momenta $k_j$, the latter of which are integrated over. In general, the numerator $N(p_i,k_j)$ will be a polynomial in the external and loop momenta. We can either include the numerator directly in our analysis (see Appendix~\ref{appendix:asymptotics_with_numerators}), or reduce these integrals to linear combinations of scalar Feynman integrals (see for example~\cite{Weinzierl:2022eaz}). In this paper, we focus on examples of scalar Feynman integrals for which $N(p_i,k_j)=1$.

The values of $p_i$, $m_e$, and $k_j$ for which the integral in equation~\eqref{eq:feynint_def} can become singular are characterized by the Landau equations~\cite{nakanishi1959,Landau:1959fi,Bjorken:1959fd}
\begin{subequations}
\begin{align}
    \alpha_e (q_e^2-m_e^2) = 0 \,,
    \label{eq:Landau1}
    \\
    \sum_{\text{loop } i} \pm \alpha_e q_e^\mu = 0\,,
    \label{eq:Landau2}
\end{align}
\label{eq:Landau}
\end{subequations}
\unskip
where the $\pm$ sign in front of each term in equation~\eqref{eq:Landau2} is determined by whether the edge $q_e^\mu$ is oriented in the same or opposite direction to the loop momentum $k_j^\mu$ (for more details, see for instance~\cite{ELOP,Sterman:1993hfp}).
Solutions to these equations that exist for all values of $p_i$ and $m_e$ describe infrared singularities, while solutions that only exist for restricted values of these variables describe the singularities that can be encountered as the integral is analytically continued in the space of external kinematics. 

In the non-perturbative S-matrix bootstrap program, solutions to the Landau equations are taken to describe surfaces in the space of external Mandelstam invariants, while $m_e^2$ and $p_i^2$ correspond to physical masses which cannot be complexified. However, the derivation of the Landau equations makes no distinction between external momenta and internal masses; Feynman integrals can become singular at special values of any of the variables on which the integral depends. In fact, singularities at $m_e^2=0$ appear ubiquitously in Feynman integrals, and correspond to Landau diagrams in which all edges but one are contracted to a point.
Solutions also exist for infinite values of the loop momenta; these are usually referred to as \emph{second-type singularities}. More generally, resolving all of the singularities of Feynman integrals generally requires blowing up the integrand before deriving the Landau equations. 
For recent work on this topic, see~\cite{Fevola:2023kaw,Fevola:2023fzn,Helmer:2024wax,Caron-Huot:2024brh}.

In the simplest cases, solutions to the Landau equations describe a \emph{simple pinch}. This means that the Landau equations uniquely determine the value of the internal momenta $q_e$ and the parameters $\alpha_e$ that describe how the integration contour is being pinched (up to simultaneous rescaling of all the $\alpha_e$), and additionally that the Hessian matrix associated with this singularity is negative definite (see Appendix~E of~\cite{Hannesdottir:2022xki}).  Simple pinch singularities enjoy a kind of topological stability which makes them especially amenable to analysis with homological methods. On the other hand, non-simple pinches---which also appear ubiquitously in Feynman integrals---require more care to analyze (for more details, see~\cite{pham,Berghoff:2022mqu}). 
Feynman integral can also have apparent singularities that appear for all values of external momenta, called \emph{permanent pinches}~\cite{boyling1968homological}. The most well-known examples are infrared divergences, which must be removed by regularization (see~\cite{PhysRevD.13.1573} for a discussion of infrared divergences from the point of view of singularities). Other permanent pinches arise in finite integrals from a poor choice of coordinates and can be resolved using blowups.
Resolving these permanent pinches can reveal new singularities that are not identified by the Landau equations as seen in equations~\eqref{eq:Landau1} and~\eqref{eq:Landau2}.

\myparagraph{\bf Local Nature of Singularities}
Once the location of a singularity has been identified, it is often possible to the determine the behavior of a Feynman integral in its vicinity. 
For example, for integrals involving generic masses, Landau showed that the leading non-analytic term in the expansion around first-type singularities can be predicted just from knowledge of which propagators must be put on shell to access this singularity~\cite{Landau:1959fi}. 
More specifically, near such a singular kinematic surface $\varphi=0$, these integrals behave as
\begin{equation} \label{eq:Landau_expansion}
    \I(\varphi) \sim \begin{cases}
   C  \varphi^\gamma \log \varphi   & \text{if } \gamma \in \mathbb{Z}^+ \\ 
   C \varphi^\gamma & \text{otherwise,}
    \end{cases}
\end{equation}
where $C$ is a function of kinematics that is regular in the $\varphi \to 0$ limit, and the \emph{Landau exponent} $\gamma = \frac12(\ell\, \D {-} n{-}1)$ is determined by the Landau diagram that describes which propagators are involved in pinching the contour. More specifically, $\ell$ and $n$ are the number of loops and edges in the graph corresponding to this Feynman integral after all the edges that do not participate in the pinch have been contracted out. 
Note that the branch points that appear in equation~\eqref{eq:Landau_expansion} are only of square root or logarithmic type. This is consistent with all known examples of Feynman integrals (see for example~\cite{Bourjaily:2022vti}), although cube root branch points have been shown, by explicit computations, to appear in energy correlators~\cite{Chicherin:2024ifn}.

Even in cases for which~\eqref{eq:Landau_expansion} does not apply, the leading non-analytic behavior of Feynman integrals near singular configurations will take the form
\begin{equation}
    \I (\varphi) \sim C \varphi^a \log^b \varphi \,,
    \label{eq:genbehavior}
\end{equation}
where $C$ is again regular in the $\varphi \to 0$ limit, and the numbers $a$ and $b$ can be determined by directly expanding the Feynman integral. For example, if the singularity corresponds to a simple pinch, we can follow the derivation in Landau's original paper to determine $a$ and $b$~\cite{Landau:1959fi} (see also~\cite{pham2011singularities}); in other cases, it may also be possible to determine $a$ and $b$ using the method of regions~\cite{Jantzen:2012mw,Heinrich:2021dbf}. As an example, in four dimensions the Regge limit of all four-point Feynman integrals have been characterized; these integrals behave as $s^a\log^bs$ for integer $a$ and $b$ as $s \to \infty$ for fixed $t$ and internal masses~\cite{PhysRev.131.480}. This means that algebraic singularities are not expected to appear in the $s \to \infty$ limit. This correspondingly puts constraints on what types of branch cuts can appear in these Feynman integrals in general kinematics.

\myparagraph{\bf Space of Functions}
Once the location and nature of the singularities of a Feynman integral are known, we would like to construct the space of functions that the integral is expected to evaluate to. This first requires ascertaining what types of special functions may appear. While one-loop Feynman integrals can be evaluated in terms of iterated integrals involving only $\d \log$ forms, more complicated classes of functions can appear at higher loops---for instance, functions involving integrals over both higher-genus curves~\cite{Huang:2013kh,Georgoudis:2015hca,Marzucca:2023gto} and higher-dimensional manifolds~\cite{Bloch:2014qca,Bourjaily:2018ycu,Bourjaily:2018yfy,Bourjaily:2019hmc,Duhr:2022pch,Duhr:2022dxb,Cao:2023tpx,McLeod:2023doa} are known to appear. 

To probe the types of functions that can arise in a given Feynman integral, we compute its leading singularities, as well as the leading singularities of the Feynman integrals that appear as its subtopologies (namely, the Feynman integrals one gets by contracting out all possible sets of propagators). For details on how these leading singularities can be computed, see Appendix~\ref{sec:maxcuts}. The crucial observation is that, in integrals that can be expressed in terms of iterated integrals involving only $\d\log$ integration kernels, it will be possible to find a sequence of residue contours that completely localize all integrations (both in the original Feynman integral, and the Feynman integrals that appear as its subtopologies). Conversely, in Feynman integrals that evaluate to more general classes of functions, such as elliptic multiple polylogarithms~\cite{2011arXiv1110.6917B,Broedel:2017kkb,Bogner:2019lfa}, one will not be able to localize all integrations via such sequences of residues. Rather, one will encounter integrals over nontrivial manifolds, such as elliptic curves. By identifying these obstructions to computing further residues, however, we learn what types of special functions may appear that go beyond $\d \log$ iterated integrals (see~\cite{Bourjaily:2021lnz,Bourjaily:2022bwx} for more background). This information can then be used to build an ansatz that draws upon the appropriate classes of special functions (see~\cite{Morales:2022csr} for an example involving elliptic curves).

Even in examples that are expected to be expressible in terms of iterated integrals over $\d \log$ forms, the leading singularities of a Feynman integral teach us about the rational and algebraic prefactors that will multiply these iterated integrals in the final result. When the number of propagators in an integral matches the number of integrations, this leading singularity is unique. However, when there are fewer propagators than integrations there will generally be several leading singularities. As an example, the triangle integral in two dimensions has three different leading singularities, each of which produces a different algebraic prefactor for the integral. After building these leading singularities into our ansatz as prefactors of the appropriate transcendental functions, we expect the remaining undetermined coefficients (which multiply these terms) to take rational numerical values. We will see this in examples below.

While in general, we expect that the Landau bootstrap method can be used to compute Feynman integrals that evaluate to general classes of special functions, we will here focus on examples that evaluate to iterated integrals involving only $\d \log$ forms. In these cases, the analytic structure of Feynman integrals can be described efficiently in terms of symbols~\cite{Gonch2,Goncharov:2010jf,Brown:2011ik,Brown1102.1312,Duhr:2012fh}, whose letters correspond to the algebraic functions (of external momenta and masses) that appear in $\d\log$ integration kernels. Thus, the locations where these letters vanish or become infinite correspond to logarithmic branch points, which we know can only arise on the singular hypersurfaces we have already cataloged.

In examples in which all singularities are expected to be logarithmic, a complete set of candidate symbol letters is already given by the polynomials that describe the singularities themselves. When algebraic singularities are also expected to appear, there is not a simple way to construct the alphabet without further information. For example, the integral may involve letters of the form $P \pm \sqrt{Q}$ where $P$ and $Q$ are polynomials in the external momenta. In these cases, the space of allowed $P$ and $Q$ is strongly restricted,  since any viable letter (and its inverse) should only vanish where singularities are expected to occur. 
In particular, such letters vanish on the locus $P^2 = Q$, so  $P^2 - Q$ must be a numeric multiple of (an integer power of) one (or more) of the Landau singularities. This allows us to construct an ansatz for the polynomials $P$ that can appear in symbol letters via the relation $P^2 = Q + C \prod_i \varphi_i^{c_i}$, where each factor $\varphi_i$ is a polynomial that vanishes on
a logarithmic singularity,  the $c_i$ are positive integers, and $C$ is a number~\cite{Heller:2019gkq}. It must then be the case that $Q + C \prod_i \varphi_i^{c_i}$ is the square of a polynomial.
This condition is a  powerful constraint.  
By scanning over all possible values of $Q$, $\varphi_i$, and $c_i$, one can systematically search for squares of polynomials.

In the general case, we should consider letters of the form
\begin{equation}
   L = \sum_i P_i Q_i^{\nu_i}  \, ,
   \label{eq:L_PQ}
\end{equation}
where the $P_i$ and $Q_i$ are polynomials in the external momenta and masses, and $\nu_i$ is a rational number.
For any possible form, one can explore the space of algebraic constraints similar to the $P\pm\sqrt{Q}$ case to constrain the possible letters. Once a set of candidate letters is determined, it can then be reduced to any subset that is multiplicatively independent, using tools from algebraic number theory (see for instance~\cite{Bourjaily:2019igt}).

\myparagraph{\bf Global Nature of Singularities}
Having constructed a candidate alphabet of symbol letters, we next constrain where each letter is allowed to appear in the symbol. The simplest set of constraints to impose come from \emph{$\alpha$-positivity}. Namely, while the symbol encodes all of the branch cuts that can be accessed as we analytically continue our Feynman integral, the singularities associated with the first entry must be encountered along the original (undeformed) contour of integration. Since the integration contour for the Feynman integral is over positive $\alpha_e$ and real values of $k_j^\mu$, the only branch points that should be accessible before the integral is analytically continued are those that arise for values $\smash{k_j^\mu \in \mathbb{R}^{1,\D-1}}$ and $\alpha_e \geq 0$.\footnote{To see this, one must first introduce Feynman parameters and recognize that they can be identified with the $\alpha_e$ parameters that appear in the Landau equations~\eqref{eq:Landau}.} We thus refer to such singularities as \emph{$\alpha$-positive singularities}.

The consequences of this observation differ for algebraic and logarithmic singularities. Consider for example the expressions
\begin{equation}
\frac{1}{\sqrt{x}} \log\frac{1-\sqrt{x}}{1+\sqrt{x}} \, , \qquad \log^2 \frac{1-\sqrt{x}}{1+\sqrt{x}} \,.
\label{xsform}
\end{equation}
These functions are not singular and do not develop any branch cuts as $x\to 0$ (on the principle sheet). Thus, it is possible for square roots that do not correspond to $\alpha$ positive singularities to appear in the first entry of the symbol, if they are appropriately compensated by square roots in prefactors or other symbol letters. More precisely, square root branch cuts will cancel out in any iterated integral that is invariant under the involution $\sqrt{\cdot} \to - \sqrt{\cdot}$\hspace{2pt}. As this mapping corresponds to one of the elements of the Galois group, we refer to iterated that have this property as being \emph{Galois even} under this sign flip.\footnote{More generally, in the case of a root of a degree $n$ polynomial, one would expect a Feynman integral to respect a total permutation symmetry with respect to all $n$ roots if the corresponding singularity is not $\alpha$-positive, as for instance seen in the results of~\cite{Chicherin:2024ifn}.}

Because of this Galois symmetry, algebraic singularities in one- and two-loop Feynman integrals can usually be put in the form
\begin{equation}
    L_i = \frac{P_i + \sqrt{Q_i}}{P_i - \sqrt{Q_i}} \, ,
    \label{eq:Li}
\end{equation}
where the $Q_i$ are products of singularities. Searching for letters of this form is vastly simpler than searching for letters of the more general form in equation~\eqref{eq:L_PQ}, so this represents a significant simplification when this assumption can be justified. 

Unlike algebraic singularities, logarithmic branch cuts that appear in the symbol cannot be canceled by algebraic prefactors. Therefore, logarithmic singularities that are not $\alpha$ positive \emph{cannot} appear in the first entry of the symbol.\footnote{In fact, the requirement that only $\alpha$-positive branch cuts are accessible in the physical region can also give rise to restrictions on the symbol beyond the first entry, as seen for example in~\cite{Dixon:2020bbt}. However, these further implications also follow from the asymptotics constrains that we describe below, so we won't work out these implications here.} On the other hand, logarithmic branch cuts that appear deeper in the symbol can be canceled off by beyond-the-symbol terms.
Indeed, logarithmic singularities which are not $\alpha$-positive generically do appear in subleading symbol entries and can be uncovered by taking discontinuities around other singularities. 

Further constraints can also be derived using knowledge of the types of behavior characterized in Eqns.~\eqref{eq:Landau_expansion} and \eqref{eq:genbehavior}. Namely, for Feynman integrals with generic masses, it has been shown that when the leading non-analytic behavior of $\I \sim \varphi^\gamma \log \varphi$ for $\gamma \in \mathbb{Z}^+$ near $\varphi=0$,
then $\varphi$ cannot appear in the 
the last $\gamma$ entries of the symbol~\cite{Hannesdottir:2021kpd}. 
A similar result can be derived for algebraic singularities:
when the leading non-analytic contribution of an integral near a square-root singularity behaves as in~\eqref{eq:Landau_expansion} with $\gamma \in \mathbb{Z}+\frac12$, no letter in the last $\gamma-\frac12$ symbol entries can involve square roots which vanish as $\varphi \to 0$. We show this in Appendix~\ref{app:asymptotic_algebraic}.

Although no such general results have been proven for integrals that do not involve generic masses, similar constraints can be derived on a case-by-case basis. 
Namely, once we know how a Feynman integral behaves locally around a branch point, we can simply expand our ansatz for $\I$ in this singular limit, and require that the ansatz exhibit the expected asymptotic behavior. As a simple example, if $\I \sim \varphi \log^2 \varphi$, the Feynman integral cannot evaluate to $\log^3 \varphi $ or $\varphi^2 \log \varphi $, but it could potentially evaluate to $\log^2 \varphi \, \Li_2 (\varphi)$. Knowledge of higher-order terms in the expansion of $\I$ around $\varphi \to 0$ can also be used to further restrict the coefficients in our ansatz, as we will see in the example of the massless box below.

\myparagraph{\bf Sequential Discontinuities}
Constraints can also be placed on the sequences of discontinuities that can be accessed in amplitudes and Feynman integrals. The first such class of constraints restrict which discontinuities can be taken in immediate succession; they are often referred to as \emph{adjacency constraints}, as they preclude certain pairs of symbol letters from appearing in adjacent entries of the symbol. A classic example is given by the Steinmann relations~\cite{Steinmann,Steinmann2,araki:1961,pham,Cahill:1973qp}, which forbid sequential discontinuities in partially overlapping channels in amplitudes that describe the scattering of stable massive particles. A number of generalizations of the Steinmann relations have been discovered in recent years~\cite{Drummond:2017ssj,Caron-Huot:2019bsq,Bourjaily:2020wvq,Benincasa:2020aoj}, which require more general sequences of discontinuities to vanish, either in individual Feynman integrals or scattering amplitudes. 

A second and more refined class of constraints can also be derived for Feynman integrals, which take into account the topology of the corresponding Feynman diagrams~\cite{pham,Hannesdottir:2022xki,Berghoff:2022mqu,Hannesdottir:2024cnn}. These types of constraints were pioneered by Pham~\cite{pham}, who used Picard-Lefschetz theory to rewrite the discontinuities of Feynman integrals in terms of integrals in which the integration contour has been localized to the on-shell locus of a subset of the propagators. As only some of the singularities that could be encountered in the original Feynman integral can still be accessed on the support of this on-shell space, only a subset of the discontinuities in the original Feynman integral will still exist. These restrictions go by the name of the \emph{hierarchical principle}~\cite{boyling1968homological,Landshoff1966}. More specifically, sequences of discontinuities can only be nonzero if the singular denominator factors that pinch the integration contour remain zero for all subsequent pinches~\cite{pham,Hannesdottir:2022xki}. While Pham's original approach required the Feynman integrals to have sufficiently generic masses, these types of constraints can more generally be derived after carrying out suitable blowups.

A hierarchical principle also exists in Feynman parameter space~\cite{Berghoff:2022mqu,Britto:2023rig}, where analogous topological constraints can be used to restrict which singularities can follow each other. In particular, a sufficient condition for the absence of a sequential singularity can be derived from the topology of the integration space, and can be automatized through computing its Euler characteristic. This approach, which leads to \emph{geneological constraints}~\cite{Hannesdottir:2024cnn}, proves to be quite powerful, as we will see in the example of the double box below.

\myparagraph{\bf Bootstrapping the Final Answer}

After deriving new constraints on the singularity structure of our chosen Feynman integral, we require that these constraints be obeyed by the ansatz we have constructed. In some cases, a unique answer may be found without it being necessary to carry out each of the steps highlighted above, while in other cases these classes of constraints may not yet be sufficient (or even possible to derive, using current technology). However, as long as a well-motivated ansatz can be constructed, it should always be possible to generate further constraints by expanding the Feynman integral to higher orders around its singular limits (as done, for instance, in~\cite{Chicherin:2017dob}). We now illustrate how the Landau bootstrap works in a number of examples.

\section{\large{Examples}}

\myparagraph{\bf Bubbles}

As a first example, we now illustrate the Landau bootstrap method for the example of the bubble integral
\begin{equation}
    \I_{\text{bub}}^{\D}(p) \!= \!\int \d^{\D} k \frac{1}{[(p-k)^2-m_1^2 + i \varepsilon][k^2-m_2^2 + i \varepsilon]} \, .
\end{equation}
We consider this integral for generic internal masses $m_1$ and $m_2$, so that we do not need to worry about carrying out blowups to identify singularities. This choice also allows us to predict the Landau exponent for each of the first-type singularities using equation~\eqref{eq:Landau_expansion}. We will bootstrap this integral in both $\D=2$ and $\D=3$ spacetime dimensions, in order to illustrate how the same types of constraints can lead to different functions. 

To carry out our bootstrap procedure, we begin by solving the Landau equations and characterizing the behavior we expect the bubble integral to exhibit near its singularities. The set of singularities we find are shown in Table~\ref{tab:bootstrap_bubble}. Note in particular that there is a second-type singularity for $D\ge 3$ at $s=0$, which we can detect by inverting the loop momenta as illustrated in Appendix~\ref{app:second-type} and in~\cite{Hannesdottir:2022bmo}. The threshold and pseudothreshold singularities at $s = (m_1 \pm m_2)^2$ give rise to square root branch cuts in $\D=2$ and logarithmic branch cuts in $\D=3$; conversely, the singularities at $m_i^2 = 0$ are logarithmic in $\D=2$ but algebraic in $\D=3$.

\begin{table}
\centering
    \begin{tabular}{c|c|c|c}
    Singularity & $\alpha$-positive? & \hspace{0.1cm} $\gamma$ in $\D=2$ \hspace{0.1cm} & \hspace{0.1cm} $\gamma$ in $\D=3$
    \hspace{0.1cm} \\ 
    \hline
    $s{-}(m_1 {+} m_2)^2$ & yes & ${-}\frac{1}{2}$ & $0$ \\
    $s{-}(m_1 {-} m_2)^2$ & no & ${-}\frac{1}{2}$ & $0$ \\
    $m_1^2$ & yes & $0$ & $\frac{1}{2}$ \\
    $m_2^2$ & yes & $0$ & $\frac{1}{2}$ \\[1mm]
    \hline
    $s$ & no & absent & $\substack{\text{not predicted} \\ \text{by equation~\eqref{eq:Landau_expansion}}}$ \\
    \end{tabular}
    \caption{Predictions for the singularities of the bubble integral in $\D=2$ and $\D=3$  dimensions and their Landau exponents $\gamma$ in the notation seen in~\eqref{eq:Landau_expansion}. The second-type singularity at $s=0$ can be shown to be absent in $\D=2$ and present but not $\alpha$-positive in $\D=3$. Its Landau exponent ($\gamma=-\frac{1}{2}$) in $\D=3$ cannot be predicted by equation~\eqref{eq:Landau_expansion}.
    } 
    \label{tab:bootstrap_bubble}
\end{table}

\vspace{.1cm}
\paragraph{\bf The Bubble in Two Dimensions}

We first consider the integral in $\D=2$, where there are logarithmic singularities at $m_i^2=0$ and algebraic singularities at $s=r_\pm^2 = (m_1 \pm m_2)^2$. This suggests that $m_1^2$ and $m_2^2$ are good candidate letters and that additional letters of the form seen in~\eqref{eq:Li} may arise, namely $\smash{P \pm R \sqrt{\vphantom{\vec{t}} \smash{(s - r_+^2)(s - r_-^2)}}}$ for some polynomials $P$ and $R$. Multiplying these by $\sqrt{\vphantom{\vec{t}} \smash{s-r_+^2}}$, we get letters of the form
$\smash{P_+ \sqrt{\vphantom{\vec{t}} \smash{s - r_+^2}} + P_- \sqrt{\vphantom{\vec{t}} \smash{s-r_-^2}}}$ for some polynomials $P_\pm(s,m_1,m_s)$. In order not to introduce logarithmic singularities at unwanted locations, we require that $(s-r_+^2)P_-^2 - (s-r_-^2)P_+^2 =0$ is only satisfied when $m_1=0$ or $m_2=0$. This is possible only if $P_+^2 =P_-^2 =1$ (see Appendix~\ref{sec:symbols-from-sqrt}).
We are thus led to the four candidate letters
\begin{align} \label{eq:bubble_2d_letters}
\{A_1,A_2, A_\pm \} = \left\{m_1^2\, ,m_2^2\, , \sqrt{s-r_+^2} \pm \sqrt{s-r_-^2} \right\} \, .
\end{align}
Note that $A_+ A_- = 4 m_1 m_2$, so we could choose to drop one of the letters from our ansatz. Instead, we choose a basis consisting of $A_1$, $A_2$ and $A_+/A_-$, which slightly simplifies the analysis below.

Next, we note that sequential discontinuities in the two masses are forbidden by the hierarchical principle:
\begin{align}
  \label{eq:double-disc-mass}
\text{Disc}_{m_2^2} \text{Disc}_{m_1^2} \, (\I_{\text{bub}}) = \text{Disc}_{m_1^2} \text{Disc}_{m_2^2} \, (\I_{\text{bub}}) = 0. 
\end{align}
Additionally, the final result cannot involve $\log^n m_i$ with $n>1$, since the Landau exponent for the masses is $\gamma=0$. As $A_\pm$ also give rise to logarithmic branch points when $m_i^2=0$, the same conclusions hold for more complicated polylogarithms that also involve these letters. The upshot is that the bubble integral in $\D=2$ must be expressible in terms of the ansatz
\begin{align}
\I_{\text{bub}}^{\D=2} = c_1 \log A_1 + c_2 \log A_2 + c_{\pm} \log \frac{A_+}{A_-} \, , \label{eq:2d_bubble_ansatz}
\end{align}
where the coefficients $c_1$, $c_2$, and $c_{\pm}$ are expected to be algebraic functions of $s$, $m_1$, and $m_2$. This is consistent with a general result that the maximum transcendental weight of an $\ell$-loop integral in $\D$ dimensions with generic masses is $\floor{\frac{\ell D}{2}}$~\cite{Hannesdottir:2021kpd}.

To fix the value of the coefficients in~\eqref{eq:2d_bubble_ansatz}, we next compute the leading singularity of the bubble integral in $\D=2$ as discussed above, finding 
\begin{align}
\text{LS}\left(\I_{\text{bub}}^{\D=2}\right) =\frac{4 \pi^2}{\sqrt{s-r_+^2}\sqrt{s-r_-^2}}\,.
\end{align}
Since this integral has a unique leading singularity, we expect each of the coefficients $c_i$, $i \in \{1,2, \pm\}$, to be a rational multiple of this function. This means, in particular, that the threshold and pseudothreshold singularities will appear as square roots in the prefactor of each term in our ansatz in equation~\eqref{eq:2d_bubble_ansatz}. As the pseudothreshold is not an $\alpha$-positive singularity, this implies that the result must be invariant under the Galois symmetry $\smash{\sqrt{\vphantom{\vec{t}} \smash{s - r_-^2}} \to -\sqrt{\vphantom{\vec{t}} \smash{s - r_-^2}}}$. The last term multiplying $c_{\pm}$ already has this property, as it maps back to itself when the sign in front of the pseudothreshold square root is flipped. As a consequence, we also have $c_1=c_2=0$, since there are no additional square roots in these other two terms that can cancel off the pseudothreshold branch cut in the physical region. This fixes the result to be
\begin{align}
\I_{\text{bub}}^{\D=2} & = \frac{2 i \pi}{\sqrt{s-r_+^2}\sqrt{s-r_-^2}} \log \left(\frac{\sqrt{s-r_+^2} - \sqrt{s-r_-^2}}{\sqrt{s-r_+^2} + \sqrt{s-r_-^2}} \right) \nonumber\\
&=\frac{i \pi}{\sqrt{r_{12}}} \log\frac{(s-m_1^2-m_2^2 -\sqrt{r_{12}})^2}{4 m_1^2 m_2^2}
\end{align}
with $r_{12}  = s^2-(m_1^2-m_2^2)^2-2s(m_1^2+m_2^2)$. 
The second form (which depends only on $m_i^2$, not $m_i$) manifests that there is no algebraic singularity in $m_i^2$, and is valid when $s<(m_1+m_2)^2$.
The numerical proportionality factor can be determined by comparing the discontinuity across the cut $s = r_+^2$ with Cutkosky's formula, which yields the leading singularity. The branch of the logarithm is determined by requiring that $\I_{\text{bub}}^{\D=2}$ is free of branch cuts for $s<(m_1+m_2)^2$. 

\vspace{.1cm}
\paragraph{\bf The Bubble in Three Dimensions}
Let us now see how the same methods lead to a different result in $\D=3$. Referring back to Table~\ref{tab:bootstrap_bubble}, we observe that a second-type singularity can now arise at $s=0$. In fact, this singularity already appears in the leading singularity in $\D=3$: 
\begin{align} \label{eq:ls_3d}
\text{LS}\left(\I_{\text{bub}}^{\D=3}\right) = \frac{2 \pi^3}{\sqrt{s}}\,.
\end{align}
Since this singularity is not $\alpha$-positive, the transcendental function that this overall prefactor multiplies must be odd under the map $\sqrt{s} \to - \sqrt{s}$. For the same reason, we know that a logarithmic branch cut cannot arise at $s=0$ in the first entry of the symbol. However, as the transcendental weight of this integral is bounded to be no more than one~\cite{Hannesdottir:2021kpd}, this implies that no logarithmic branch cuts arise at $s=0$.  

Combining this information about the second-type singularity with our other expectations from Table~\ref{tab:bootstrap_bubble}, that we only get algebraic branch cuts at $m_i^2=0$ and logarithmic branch cuts at $s = (m_1\pm m_2)^2$, we find we can generate four independent candidate symbol letters:
\begin{equation}
        s+m_1^2-m_2^2 \pm 2 \sqrt{m_1^2} \sqrt{s} \,, \quad
    s+m_2^2-m_1^2 \pm 2 \sqrt{m_2^2} \sqrt{s}\, .
\end{equation}
Only certain combinations of these letters are consistent with the $\alpha$-positivity predictions for this integral, which dictate that only the threshold singularity is accessible in the physical region:
\begin{align}
    A_1 & = \frac{-s-m_1^2+m_2^2 + 2 \sqrt{m_1^2} \sqrt{s}}{-s-m_1^2+m_2^2 - 2 \sqrt{m_1^2} \sqrt{s}} \,, \\
    A_2 & =  \frac{-s-m_2^2+m_1^2 + 2 \sqrt{m_2^2} \sqrt{s}}{-s-m_2^2+m_1^2 - 2 \sqrt{m_2^2} \sqrt{s}} \,.
\end{align}
By symmetry in $m_1 \leftrightarrow m_2$, the only possible symbol letter becomes
\begin{equation}
    A_1 A_2 = \left(\frac{\sqrt{m_1^2} + \sqrt{m_2^2} - \sqrt{s}}{\sqrt{m_1^2} + \sqrt{m_2^2} + \sqrt{s}} \right)^2 \,,
\end{equation}
and the final answer must be
\begin{align}
\I_{\text{bub}}^{\D=3} = \frac{i \pi^2}{\sqrt{s}} \log \left(\frac{\sqrt{m_1^2} + \sqrt{m_2^2} - \sqrt{s}}{\sqrt{m_1^2} + \sqrt{m_2^2} + \sqrt{s}} \right) \, ,
\label{d3bubble}
\end{align}
where the proportionality constant can be fixed as in the two-dimensional case.  We note that in this case the relation in equation~\eqref{eq:double-disc-mass} is nontrivially satisfied.

\myparagraph{\bf The Massless Box in Six Dimensions}
Next, we consider the box integral in six dimensions with massless internal and external particles:
\begin{equation}
    \begin{gathered}
    \raisebox{0pt}[\height][\depth]{\hspace{-30pt}%
    \begin{tikzpicture}[scale=0.4, thick]
    \draw [decorate,decoration={brace,amplitude=5pt,mirror}]
    (-2.5,1.5) -- (-2.5,-1.5);
    \node at (-4.5,0) {$s$};
    \draw [decorate,decoration={brace,amplitude=5pt,mirror}]
    (-1.5,-2.5) -- (1.5,-2.5);
    \node at (0,-4) {$t$};
    \draw[dashed] (-1,1) -- (1,1) -- (1,-1) -- (-1,-1) --cycle;
    \draw[dashed] (-1,1) -- ++(135:1);
    \draw[dashed] (1,1) -- ++(45:1);
    \draw[dashed] (1,-1) -- ++(-45:1);
    \draw[dashed] (-1,-1) -- ++(-135:1);
    \end{tikzpicture}
    }
    \end{gathered}
\end{equation}
This integral is finite, and can be evaluated (for instance, via direct integration) to give
\begin{equation}
    \I_{\text{box}}^{\D=6} = - i \pi^3 \frac{\log^2\left(\frac{-s-i\varepsilon}{-t-i\varepsilon}\right) + \pi^2}{2 (s+t)} \,. \label{eq:6d_box}
\end{equation}
We now show how we can go about deriving the same result using the Landau bootstrap. Going forward, we assume that $s,t<0$ so we can drop the $i\varepsilon$'s.

We again begin by solving the Landau equations, finding solutions when
\begin{equation}
    s, t, u \in \{0,\infty\} \,,
    \label{eq:stu_sings}
\end{equation}
where $u=-s-t$.  We would also like to predict what types of branch cuts can arise at these points; however, we have checked that the singularities at $s=0$ and $t=0$ do not arise from simple pinches (using the methods for expanding $\I_{\text{box}}^{\D=6}$ described in~\cite{Landau:1959fi,pham2011singularities}). This means that the easiest way to learn about these limits is using the method of regions. In particular, this integral can be computed as an expansion around the $\frac st \to 0$ limit at fixed $t$, whereupon it is found it takes the form
\begin{align}
        \label{eq:lims0}
        \mathcal{I}^{\D=6}_{\text{box}}\left(\frac st\to 0 \right) & = - i \pi^3 \Bigg[ \frac{\log^2 (s/t)}{2 t} \sum_{i=0}^\infty g_i \Big(\frac{s}{t}\Big)^i  \\
        \nonumber & \qquad 
        + \frac{\pi^2}{2 t} \sum_{i=0}^\infty h_i \Big(\frac{s}{t}\Big)^i \Bigg] \,,
        % & \sim \frac{\log^2(s/t)}{t} + {\mathcal{O}}\Big(s \log^2 (s/t)\Big)
\end{align}
where $g_i$ and $h_i$ are numbers. The limit as $\frac{s}{t} \to \infty$ can be obtained by symmetry as the $\frac{t}{s} \to 0$ limit. Importantly, these results imply that the branch cuts that develop in these limits will only be logarithmic. 

In fact, we can derive the same result starting from the general ansatz we wrote for symbol letters in equation~\eqref{eq:L_PQ}. Given that the only singularities in this integral appear when one of $s$, $t$, or $u$ vanish or becomes infinite, we must have that each $Q_i$ and $P_i$ take the form $s^a t^b u^c$ for some integers $a$, $b$, and $c$. It is not hard to convince oneself that constructing an algebraic symbol letter out of these monomials will always introduce singularities that go beyond the loci in equation~\eqref{eq:stu_sings}. We conclude that there can be no algebraic symbol letters in this Feynman integral.\footnote{This second argument leaves open the possibility that the prefactor of the integral involves a square root; however, this possibility can be ruled out by computing the leading singularity of this integral, which is rational.}

Only two multiplicatively-independent symbol letters can be constructed whose logarithmic singularities are contained within the set equation~\eqref{eq:stu_sings}. We choose a basis of letters given by
\begin{equation}
    B_1 = \frac{s}{t} \equiv x \,, \qquad B_2 = - \frac{u}{t} \equiv 1+x \,,
\end{equation}
in terms of which we can construct an ansatz for the symbol of the box integral:
\begin{align}
    &\sum_{i,j,k=1}^2 c_{ijk} B_i \otimes B_j \otimes B_k + \sum_{i,j}^2 c_{ij} B_i \otimes B_j
    + \sum_{i=1}^2 c_i B_i\,. \label{eq:box_ansatz}
\end{align}
Here, the coefficients $c_\bullet$ are allowed to be functions of $s$, $t$, and $u$. Note that this ansatz is automatically integrable, since it only depends on a single variable $x$.

To constrain the coefficients that appear in~\eqref{eq:box_ansatz}, we first require that it is invariant under the exchange $s \leftrightarrow t$, which sends $x \to x^{-1}$ and $1+x \to \frac{1+x}{x}$. Moreover, using the fact that the only $\alpha$-positive solutions to the Landau equations correspond to $x = \{0,\infty\}$, we disallow the letter $B_2$ from appearing in the first entry. With these constraints imposed, our ansatz reduces to
\begin{multline}
  c_{112} x \otimes x \otimes (1 + x) +
  c_{121} x \otimes (1 + x) \otimes x \\
  -\frac 1 2 (c_{111} + c_{121}) x \otimes x \otimes x
  + c_{11} x \otimes x.
\end{multline}
We now require that the expansion of this symbol
as $\frac{s}{t} \to 0$ and $\frac{t}{s} \to 0$ matches the known form of the expansion in equation~\eqref{eq:lims0}. This imposes the requirement that $c_{112} = c_{121} = 0$ and sets the overall scale of our ansatz, giving us the unique result
\begin{align}
    \S (\I_{\text{box}}^{\D=6}) \propto \frac{1}{s+t} \left( \frac{s}{t} \otimes \frac{s}{t} \right) \,.
\end{align}
Finally, we can upgrade this result to full function level, again using the fact that the singularity at $u = -s-t$ is not $\alpha$ positive. This means that we need to choose a constant $c_0$ such that the expression
\begin{equation}
   \I_{\text{box}}^{\D=6} \propto  \frac{\log^2\left(\frac{s}{t}\right)}{2 (s+t)} + c_0 
\end{equation}
remains finite when $s=-t$. This gives us the constraint that $(\pm i \pi)^2 + 2(s+t)c_0 = 0$. Our final result for the box integral thus becomes
\begin{equation}
    \I_{\text{box}}^{\D=6} \propto \frac{\log^2\left(\frac{s}{t}\right)+ \pi^2}{2 (s+t)} \,,
\end{equation}
which matches the expression in~\eqref{eq:6d_box} after fixing the prefactor using~\eqref{eq:lims0}.

\myparagraph{\bf The Double Box with a Massive Internal Loop}
As a final example, we consider the double box integral in which the outermost loop of propagators has been given a mass:
\begin{equation}
    \begin{gathered}
    \raisebox{0pt}[\height][\depth]{\hspace{-30pt}%
    \begin{tikzpicture}[scale=0.5, thick]
    \draw [decorate,decoration={brace,amplitude=5pt,mirror}]
    (-2.5,1.8) -- (-2.5,-1.8);
    \node at (-4.5,0) {$s$};
    \draw [decorate,decoration={brace,amplitude=5pt,mirror}]
    (-1.8,-2.5) -- (3.8,-2.5);
    \node at (1,-4) {$t$};
    \draw[line width=1.2] (-1,1) -- (3,1) -- (3,-1) -- (-1,-1) --cycle;
    \node[] at (0,1.8) {$m$};
    \draw[dashed] (1,1) -- (1,-1);
    \draw[dashed] (-1,1) -- ++(135:1.5);
    \draw[dashed] (3,1) -- ++(45:1.5);
    \draw[dashed] (3,-1) -- ++(-45:1.5);
    \draw[dashed] (-1,-1) -- ++(-135:1.5);
    \end{tikzpicture}
    }
    \end{gathered}
\end{equation}
Here, all of the solid lines denote propagators of mass $m$, while dashed lines denote massless particles.
We consider this integral in $\D=4$, where it is both UV and IR finite. In~\cite{Caron-Huot:2014lda}, this integral was evaluated in terms of polylogarithms in $\D=4-2\epsilon$. 

We first catalog all the solutions to the Landau equations that can be found using current methods. Our survey turns up $\alpha$-positive singularities at
\begin{equation}
\begin{gathered}
    s= 4m^2, \quad s\to \infty, \\ t=4m^2, \quad t\to \infty,  \\ m^2= 0, \label{eq:thresholds}
\end{gathered}
\end{equation}
and additional singularities at
\begin{equation}
\begin{gathered}
    s=0, \qquad t=0, \qquad m^2 \to \infty \,,   
 \\
   s+t = 0, \qquad  st - 4 m^2 s - 4 m^2 t = 0 \,.
    \label{eq:zerosols}
\end{gathered}
\end{equation}
This integral also has a singularity at $s t^2-2 s t m^2+s m^4-4 t^2 m^2=0$ at higher orders in $\epsilon$; however, we show in Appendix~\ref{app:second-type} that the corresponding solution to the Landau equations only appears outside of four dimensions. Since we are bootstrapping the double box in strictly four dimensions, we do not include this singularity in our calculation.

Adopting the conventions of~\cite{Caron-Huot:2014lda}, we formulate the two dimensionless variables that the transcendental part of this integral can depend on as
\begin{equation}
    u = - \frac{4m^2}{s}\,, \qquad v = - \frac{4m^2}{t} \,.
\end{equation}
In these variables, the solutions to the Landau equations cataloged above correspond to either $u$, $v$, or $u+v$ taking one of the values $\{-1,0,\infty\}$. Since, in general, each of these singularities arises from divergences that occur at multiple locations along the integration contour (where each of these pinches may have a different Landau exponent), we will assume that both logarithmic and square root branch cuts can arise on each of these kinematic hypersurfaces. The exception is that we assume that the square roots $\sqrt{u}$ and $\sqrt{v}$ cannot arise in symbol letters, since their presence would lead to a different Regge limit than the one computed using the method of regions~\cite{PhysRev.131.480}. In particular, the limits as $u \to 0$ (at fixed $v$) and $v \to 0$ (at fixed $u$) show that the singularities at $u=0$ and $v=0$ are logarithmic in $\D=4$ spacetime dimensions.

Carrying out a systematic search for letters that can be built out of these singularities, we find only twelve multiplicatively-independent letters. These letters are presented in Table~\ref{tab:double_box_letters}, 
where (again following~\cite{Caron-Huot:2014lda}) we have defined
\begin{gather}
\beta_{u} = \sqrt{1+u}\,, \qquad \beta_{v} = \sqrt{1+v}\, , \\  \beta_{uv} = \sqrt{1+u+v}\,.
\end{gather}
Letters $L_1$ through $L_{10}$ match the letters that are known to appear in the (dimensionally-regulated) answer~\cite{Caron-Huot:2014lda}; however, we find two additional letters $L_{11}$ and $L_{12}$ that cannot be ruled out simply because of where they give rise to branch cuts. We correspondingly include them in our bootstrap calculation.

\begin{table}
\renewcommand{\arraystretch}{1.5}
\begin{center}
\begin{tabular}{c | c c c} Letter \ & \hspace{0.0cm} Definition \hspace{0.0cm} & $\log(\bullet)$ sing. & $\sqrt{\bullet}$ sing. \\
\hline
 $L_1$ & $u$ & $s$, $m^2$, $\frac{1}{s}$, $\frac{1}{m^2}$ & -  \\ 
 $L_2$ & $v$ & $t$, $m^2$, $\frac{1}{t}$, $\frac{1}{m^2}$ & - \\  
 $L_3$ & $1+u$ & 
 $s$, $s\!-\!4m^2$,  $\frac{1}{m^2}$ & - \\ 
 $L_4$ & $1+v$ & $t$, $t\!-\!4m^2$,   $\frac{1}{m^2}$ & - \\ 
 $L_5$ &$u+v$ &  $s$, $t$, $s+t$, $m^2$,  $\frac{1}{m^2}$ & - \\ 
 $L_6$ & $\frac{\beta_u-1}{\beta_u+1}$ &  $m^2$, $\frac{1}{s}$& $s-4m^2$, $s$, $\frac{1}{m^2}$ \\ 
 $L_7$ & $\frac{\beta_v-1}{\beta_v+1}$ & $m^2$, $\frac{1}{t}$  & $t-4m^2$, $t$, $\frac{1}{m^2}$ \\
 $L_8$ & $\frac{\beta_{uv}-1}{\beta_{uv}+1}$ & $s+t$, $m^2$ & \begin{tabular}{c} $s$, $t$, \\[-5pt] $st{-}4m^2 s{-}4m^2 t$ \end{tabular}  \\   
 $L_9$ & $\frac{\beta_{uv}-\beta_u}{\beta_{uv}+\beta_u}$ & $s$, $m^2$, $\frac{1}{t}$ & \begin{tabular}{c} $s-4m^2$, $t$, \\[-5pt] $st{-}4m^2 s{-}4m^2 t$ \end{tabular}  \\ 
 $L_{10}$ & $\frac{\beta_{uv}-\beta_v}{\beta_{uv}+\beta_v}$ & $t$, $m^2$, $\frac{1}{s}$  & \begin{tabular}{c} $t-4m^2$, $s$, \\[-5pt] $st{-}4m^2 s{-}4m^2 t$ \end{tabular}  \\ 
 $L_{11}$ & $1+u+v$ & 
 \begin{tabular}{c}
 $s$, $t$, $\frac{1}{m^2}$, \\[-5pt] $st{-}4m^2 s{-}4m^2 t$ 
 \end{tabular} & -
 \\ 
  $L_{12}$ & $\frac{\beta_{uv}-\beta_u \beta_v}{\beta_{uv}+\beta_u \beta_v}$ \!\!\!\!\!\! & $\frac{1}{s}$, $\frac{1}{t}$, $m^2$ &  \begin{tabular}{c}
$s-4 m^2$, $t-4 m^2$, \\[-5pt] \!\!\!\! $m^{-2}$, $st{-}4m^2 s{-}4m^2 t$
 \end{tabular}
\\ 
\end{tabular}
\end{center}
\caption{The twelve symbol letters that can be constructed out of the solutions to the Landau equations in equations~\eqref{eq:thresholds} and~\eqref{eq:zerosols}, and the kinematic loci where each of them develops logarithmic and square root branch cuts. Although a priori we allow for any of these letters, $L_{11}$ and $L_{12}$ happen not to appear in the final answer.}
\label{tab:double_box_letters}
\end{table}

In order to formulate an ansatz for $\I^{\D=4}_{\text{dbox}}$, we next determine the prefactor of this integral. Computing its leading singularity, we find
\begin{equation}
\text{MaxCut}\left(\I^{\D=4}_{\text{dbox}}\right) \propto \tfrac{1}{s^2 t \sqrt{1 + u} \sqrt{1 + u + v}} \, .
\end{equation}
Correspondingly, we take our initial ansatz for the symbol of $\I^{\D=4}_{\text{dbox}}$ to be
\begin{gather}
     \S (\I^{\D=4}_{\text{dbox}})  \propto \tfrac{1}{s^2 t \sqrt{1 + u} \sqrt{1 + u + v}} \, \tilde{\I}_{\text{dbox}}\, , \label{eq:db_box_ansatz}
     \\
     \tilde{\I}_{\text{dbox}} = \sum c_{i_1,i_2,i_3,i_4} L_{i_1} \otimes L_{i_2} \otimes L_{i_3} \otimes L_{i_4}\,,
     \label{eq:symI}
\end{gather}
where the coefficients $c_{i_1,i_2,i_3,i_4}$ are rational numbers, and the sum ranges over all values of $\{i_1,i_2,i_3,i_4\} \in \{1, \dots, 12\}$. Note that this ansatz assumes $\I^{\D=4}_{\text{dbox}}$ will evaluate to a polylogarithm with uniform transcendental weight four; this expectation comes from the fact that this integral is (proportional to) a pure integral in the sense of~\cite{Arkani-Hamed:2010pyv}.

We next require our ansatz to be integrable (see Appendix~\ref{app:integrability}).
This reduces the number of free coefficients to 6993. In addition, since no multiplicative combination of the square roots $\beta_u$, $\beta_v$, and $\beta_{uv}$ can be formed such that all non-$\alpha$-positive branch cuts cancel, we conclude that our ansatz must be Galois even with respect to flipping the sign in front of each of these roots. Taking into account the algebraic prefactor in equation~\eqref{eq:db_box_ansatz}, this means each term in the symbol of $\tilde{\I}_{\text{dbox}}$ must involve an even number of letters that depend on $\beta_v$ and an odd number of letters that depend on each $\beta_u$ and $\beta_{uv}$. Imposing this constraint, we further reduce the number of free coefficients to 861. 

We must separately impose $\alpha$-positivity constraints on the logarithmic branch cuts that appear in the first entry of the symbol. Only 7 letters can be formed that are free of non-$\alpha$-positive logarithmic singularities: 
\begin{align}
\left\{\frac{L_1}{L_3},\, \frac{L_2}{L_4} ,\, \frac{L_3 L_8}{L_5 L_{10}} ,\, \frac{L_4 L_8}{L_5 L_{9}} ,\, L_6 ,\, L_7 ,\, L_{12} \right\}
\end{align}
We thus allow only these letters to appear in the first entry of the symbol.

Next, we leverage the hierarchical principle. While deducing all of the implications of this principle (which would require identifying all solutions to the Landau equations) remains hard, some of these constraints can be easily deduced using the genealogical methods introduced in~\cite{Hannesdottir:2024cnn}. Using that technology, we find that threshold discontinuities at $t = 4m^2$ cannot follow discontinuities with respect to any of the branch cuts in the $s$ plane. This implies that the symbol letter $L_4$ cannot appear after $L_1$, $L_3$, $L_5$, $L_6$, $L_8$, $L_9$, $L_{10}$, $L_{11}$, or $L_{12}$, as all of the latter give rise to logarithmic branch points when $s$ is equal to $0$, $4m^2$, $-t$, $\frac{4 m^2 t}{t - 4 m^2}$, or $\infty$. It also imposes further restrictions on letters in which algebraic branch cuts arise at $t=4m^2$; for instance, it implies that $L_{12}$ cannot appear in the first entry, as such a first entry would allow one to compute a logarithmic discontinuity with respect to the branch point at $s \to \infty$, followed by an algebraic discontinuity at $t = 4m^2$ (since our Galois constraints ensure that an odd number of the letters appearing after $L_{12}$ involve $\beta_v$). Similarly, $L_{7}, L_{10}$, and $L_{12}$ cannot appear after the above list of letters with logarithmic branch points in $s$.
After imposing these genealogical constraints, we find our ansatz has only 28 free coefficients remaining.

We can fix most of these final 28
coefficients by leveraging more refined information about the solutions to the Landau equations that identify singularities at $s=4m^2$. Only four of these solutions are $\alpha$-positive---two which correspond to bubble Landau diagrams in which two massive lines have been cut, and two which correspond to sunrise Landau diagrams in which one massless and two massive lines have been cut. The Landau exponent associated with the massive bubble singularities is given by equation~\eqref{eq:Landau_expansion} as $\frac 1 2$, while the Landau exponent associated with the sunrise singularities also turns out to be $\frac 1 2$~\cite{unpublished}. As such, we do not expect $L_3$, which has a logarithmic singularity at $s=4m^2$, to appear in the first entry. Imposing this requirement on our ansatz reduces the number of free coefficients to just 6. 

\begin{table}
    \centering
    \begin{tabular}{|l|c|}
    \hline
    Sequences of four letters \ \ & \ \ 20736 \ \ \\
    Integrable weight-four symbols \ \ & \ \ 6993 \ \ \\
    Galois symmetry & 861 \\
    Physical logarithmic branch cuts & 161 \\
%    vanishing $s \to 0$ limit &  142 \\
    Genealogical constraints & 28 \\
    Only algebraic $\alpha$-positive thresholds \ \ & 6 \\
    Threshold expansion in $t$ & 1\\
    \hline
    \end{tabular}
    \caption{The number of free parameters that remain after imposing successive constraints on our ansatz for $\S (\I^{\D=4}_{\text{dbox}})$.}
    \label{tab:bootstrap_dbox}
\end{table}

Finally, using the method of regions, we can show that the expansion of this integral around the $t=4m^2$ threshold involves square root branch points, but no large logarithms~\cite{HS}. This allows us to rule out $L_4$ from appearing anywhere in our ansatz, which fixes one more of our free coefficients. Determining the value of the remaining overall coefficient by comparing to the leading singularity, we are left with the symbol
\begin{align}
\label{eq:symbol}
\mathcal{S}(\tilde{\I}_{\text{dbox}}) & =
- L_6 \otimes \frac{L_1}{L_3} \otimes L_6 \otimes L_9 - L_6 \otimes \frac{L_1}{L_3} \otimes L_9 \otimes L_6 
\nonumber
\\ &
+ L_6 \otimes L_6 \otimes \frac{L_1 L_2}{L_3 L_5} \otimes L_9
+ L_6 \otimes L_9 \otimes \frac{L_2}{L_5} \otimes L_6 
\nonumber
\\ &
+ L_6 \otimes L_6 \otimes L_8 \otimes L_6  
+ L_6 \otimes L_9 \otimes L_8 \otimes L_9
\nonumber
\\ &
+ L_7 \otimes L_{10} \otimes \frac{L_2}{L_5}  \otimes L_6
+ L_7 \otimes L_{10} \otimes L_8 \otimes L_9
\nonumber
\\ &
+ L_7 \otimes L_7 \otimes \frac{L_1}{L_5} \otimes  L_9
+ L_7 \otimes L_7 \otimes L_8 \otimes L_6 \, .
\end{align}
This symbol matches the $\mathcal{O}(\epsilon^0)$ contribution to what was found in~\cite{Caron-Huot:2014lda}.

\myparagraph{\bf Conclusions}
The extent to which the analytic properties of Feynman integrals dictate their functional form is remarkable, as is the extent to which these analytic properties can be predicted. We have here presented a new Landau bootstrap method, which attempts to make practical use of these observations by 
determining Feynman integrals from only knowledge of their singular behavior. This strategy involves determining both the locations and nature of the singularities that appear in a given Feynman integral, in order to construct and constrain an ansatz that has these properties built in.

As a proof of principle, we have shown that the Landau bootstrap method can be used to compute the double-box integral with a massive internal loop, reproducing the results of~\cite{Caron-Huot:2014lda}. In doing so, we have only made use of information about the singular behavior of this integral that can be predicted using current technology. Given the flurry of recent work on understanding the singularity structure of Feynman integrals, we expect that our ability to make these types of predictions will continue to grow. In more difficult examples, we also have the ability to incorporate constraints that come from special kinematics limits around which Feynman integrals can be more easily computed, such as soft, collinear, and Regge limits. (In principle, Feynman integrals can be computed as an expansion around any of their singular points.) All of these inputs can help fix the coefficients in a well-motivated ansatz. 

More generally, we believe that substantial advancements can and soon will be made in our ability to execute each of the steps that comprise the Landau bootstrap. This makes us optimistic that it will soon be possible to compute a wide range of Feynman integrals---and by extension, scattering amplitudes---from knowledge of their singularity properties alone.

\vspace{.4cm}
\acknowledgments%
CV wishes to thank Jacob Bourjaily and John Collins for discussions. AM and CV are grateful to the Simons Center for Geometry and Physics for hospitality.   AJM is supported by the Royal Society grant URF{\textbackslash}R1{\textbackslash}221233, CV is supported in part by grant 00025445 from Villum Fonden. HSH gratefully acknowledges funding provided by the J. Robert Oppenheimer Endowed Fund of the Institute for Advanced Study. 
HSH and MDS are supported in part by
%This material is based upon work supported by 
the U.S. Department of Energy, Office of Science, Office of High Energy Physics under Awards DE-SC0009988 and DE-SC0013607 respectively.
\appendix

\setcounter{secnumdepth}{2}

\section{Asymptotic Constraints on Algebraic Branch Points}
\label{app:asymptotic_algebraic}
In this appendix we derive a connection between the leading power at which square root branch points appear in Feynman integrals, and the positions in the symbol where the corresponding square roots appear. This derivation parallels the one given for logarithmic branch points in~\cite{Hannesdottir:2021kpd}, and similarly assumes that the Feynman integrals under consideration involve generic internal and external masses. At the end of the section, we comment on how this result can change when not all masses are generic, and describe how this type of asymptotic behavior can still be leveraged to bootstrap Feynman integrals with particles of equal or vanishing mass.

We start by considering a polylogarithmic Feynman integral $\I$ whose symbol can be abstractly denoted as
\begin{equation} \label{eq:symb_Feynman_integral}
    \mathcal{S}(\I) = \sum a_1 \otimes \dots \otimes a_n \, ,
\end{equation}
where we have left the kinematic dependence of the integral $\I$ and each of the symbol letters $a_j$ implicit. We have also left the sum in~\eqref{eq:symb_Feynman_integral} schematic, since we will analyze it term by term. 

We can construct the iterated integral corresponding to each term in~\eqref{eq:symb_Feynman_integral} by pulling back each $d \log a_j$ to an auxiliary space of integration variables $t_i$. For simplicity, we choose the integration contour to be a straight path 
\begin{equation}
    \sigma_i(t) = (1-t) p_i^\bullet  + t p_i 
    \label{eq:sigma}
\end{equation}
that interpolates between some generic initial set of kinematic values $p_i^\bullet$ and the kinematic point $p_i$. The pullback of the $d\log$s are then given by
\begin{equation}
    \sigma^* ( d \log a_j) (t) = \frac{(p_i - p_i^\bullet) \cdot (\nabla_i a_j )(\sigma(t)) }{a_j(\sigma(t))} dt \, ,
    \label{eq:pullback}
\end{equation}
in terms of which we can rewrite the symbol of $\I$ from~\eqref{eq:symb_Feynman_integral} as an iterated integral
\begin{align} 
    \I & = \sum \int_{0 \leq t_1 \leq \cdots \leq t_n \leq 1} \sigma^* (d \log a_1)(t_1) 
    \label{eq:iterated_integral_Feynman_integral}
     \\ & \qquad \times \sigma^* (d \log a_2)(t_2)
    \dots \sigma^* (d \log a_n)(t_n) \, .
    \nonumber
\end{align}
Note that the equality in~\eqref{eq:iterated_integral_Feynman_integral} only holds up to transcendental constants, since our integration contour has been chosen to start at a generic point. 

We now focus on the contribution from a single term in the sum~\eqref{eq:iterated_integral_Feynman_integral}, in which we assume the letter $a_m$ depends on $\sqrt{\varphi}$. By changing the order of integration, we can rewrite the iterated integral in this term as
\begin{equation} \label{eq:iterated_integral_different_form}
    \int_0^1 U(t)  \sigma^*(d \log a_m(\sqrt{\varphi})) (t) V(t) \, ,
\end{equation}
where $U(t)$ and $V(t)$ are themselves iterated integrals defined by
\begin{equation}
U(t) = \int_0^t \sigma^*(d \log a_1)(t_1) \cdots \int_{t_{m - 2}}^t \!\!\sigma^*(d \log a_{m-1})(t_{m-1}) \,  
\end{equation}
and
\begin{equation} \label{eq:V_def}
V(t) = \int_t^1 \sigma^*(d \log a_{m+1})(t_{m+1}) \cdots \int_{t_{n - 1}}^1 \!\! \sigma^*(d \log a_n)(t_n) \, .
\end{equation}
We assume for now that $\sqrt{\varphi}$ does not appear in any of the letters after position $m$, so $V(t)$ depends only rationally on $\varphi$.

The most general form that $a_m(\sqrt{\varphi})$ can take is $P + Q \sqrt{\varphi}$, where  $P$ and $Q$ are rational functions of $\varphi$ (although they can have algebraic dependence on the other kinematic variables). However, it is often more convenient to work with letters of the form $\smash{a_m(\sqrt{\varphi}) = \frac {P + Q \sqrt{\varphi}}{P - Q \sqrt{\varphi}}}$, which are odd under the transformation $\sqrt{\varphi} \to -\sqrt{\varphi}$. The differential of a letter taking this form is given by
\begin{align}
   d \log \frac {P + Q \sqrt{\varphi}}{P - Q \sqrt{\varphi}} & =
-\frac {2 Q \sqrt{\varphi}}{P^2 - Q^2 \varphi} d P
\\ & \hspace{-1cm} +\frac {2 P \sqrt{\varphi}}{P^2 - Q^2 \varphi} d Q
+ \frac {P Q}{P^2 - Q^2 \varphi} \frac {d \varphi}{\sqrt{\varphi}}\, . 
\nonumber
\end{align}
In the limit $\varphi \to 0$, the leading contribution is thus
\begin{equation}
d \log \frac {P + Q \sqrt{\varphi}}{P - Q \sqrt{\varphi}} = \frac {d \varphi}{\sqrt{\varphi}} \frac {Q}{P} + \mathcal{O}(\sqrt{\varphi}) \, .
\end{equation}
When pulled back to the path $\sigma$, the variables $P$, $Q$, and $\varphi$ all become functions of $t$.  Taking $\varphi(t) = (1 - t) + t \varphi$, we then have
\begin{equation}
    \sigma^*\Bigl(d \log \frac {P + Q \sqrt{\varphi}}{P - Q \sqrt{\varphi}}\Bigr) \sim
     - \frac {Q(t)}{P(t)} \frac {d t}{\sqrt{\varphi(t)}}.
\end{equation}
at leading order in $\varphi$.

We want to study the behavior of the integral~\eqref{eq:iterated_integral_different_form} in the $\varphi \to 0$ limit. Due to the way we have parameterized the integration contour, these singularities will arise near the $t \to 1$ integration boundary.  Assuming that $P(t)$ and $Q(t)$ are regular when $t$ approaches 1, we can replace them by $P(1)$ and $Q(1)$. Similarly, we need to consider $U(t)$ and $V(t)$ as $t \to 1$. $U(1)$ will generically be a non-zero function of the remaining kinematic degrees of freedom and can be pulled out of the integral over $t$, while $V(1)$ vanishes due to the fact that the integration contours in~\eqref{eq:V_def} all vanish. To find the leading non-analytic behavior of $\I_G(\varphi \to 0)$, we expand $V(t\to1)$ to leading order. As explained in~\cite{Hannesdottir:2021kpd}, the first non-zero contribution will occur at order $(t - 1)^{n-m}$, since there are $n-m$ vanishing integrals. Defining $q = n - m$ for convenience, we are left with the integral
\begin{equation}
    \int_0^1 \frac {d t}{\sqrt{(1 - t) + t \varphi}} (t - 1)^q.
\end{equation}
Changing the integration variable to $u = (1 - t) + t \varphi$, this integral can be performed by binomially expanding the integrand:
\begin{multline}
\int_{\varphi}^1 \frac {d u}{\sqrt{u}} (u - \varphi)^q =
 \int_{\varphi}^1 d u \sum_{k = 0}^q \binom{q}{k} u^{k - \frac 1 2} (-1)^{q - k} \varphi^{q - k} \\
 = \sum_{k = 0}^q \binom{q}{k} \frac 1 {k + \frac 1 2} (1 - \varphi^{k + \frac 1 2}) (-1)^{q - k}\varphi^{q - k} \\
 =  -\varphi^{q + \frac 1 2} \sum_{k = 0}^q \frac {(-1)^{q - k}}{k + \frac 1 2} \binom{q}{k} +
\sum_{k = 0}^q \frac {(-1)^{q - k}}{k + \frac 1 2} \binom{q}{k} \varphi^{q - k}.
\end{multline}
Note that second sum in the last line is over integer powers of $\varphi$, and is therefore independent of $\sqrt{\varphi}$.  Correspondingly, the leading non-analytic behavior is
\begin{align}
\int_0^1 \frac {d t}{\sqrt{(1 - t) + t \varphi}} (t - 1)^q & \sim
-\varphi^{q + \frac 1 2} \sum_{k = 0}^q \frac {(-1)^{q - k}}{k + \frac 1 2} \binom{q}{k} 
\nonumber
\\
& \hspace{-2cm} = 2 (-1)^{q+1} \varphi^{q + \frac 1 2} \frac {(2 q)!!}{(2 q + 1)!!} \, ,
\label{eq:symexp}
\end{align}
namely it goes as $\varphi^{q + \frac 1 2} = \varphi^{n-m + \frac 1 2}$.

In the case of Feynman integrals with generic masses, we can immediately compare this to the order at which algebraic branch cuts are predicted to appear, which can be read off of equation~\eqref{eq:Landau_expansion}. Comparing this result for non-integer $\gamma$ to equation~\eqref{eq:symexp}, we conclude that
the square root branch point associated with  $\varphi = 0$ and identified with a Landau diagram with Landau exponent $\gamma$ must appear at least $\gamma - \frac 1 2$ entries from the end of the symbol. For instance, when $\gamma=\frac{3}{2}$ we have that $\sqrt{\varphi}$ cannot appear in the last entry of the symbol. Note that when $\gamma + \frac 1 2$ is a negative integer, however, there is no constraint on where $\sqrt{\varphi}$ can appear.

In the above analysis, we have placed no restriction on the dependence of $U(t)$, but have assumed that $V(t)$ is free of algebraic dependence on $\varphi$. We have also assumed that $U(t)$ and $V(t)$ evaluate to nonzero constants as $t \to 1$. In many Feynman integrals, however, there exist letters that approach 1 as $\varphi \to 0$, such as $a_i=1-\varphi$. The appearance of these letters can suppress the values of $U(t)$ and $V(t)$ in the $\varphi \to 0$ limit, changing the order at which the algebraic branch cut at $\varphi\to 0$ first appears. In practice, then, when working with examples involving pairs of letters that alternately vanish and give rise to branch points at the same kinematic location, it often proves easiest to directly expand one's ansatz for the symbol in the $\varphi \to 0$ limit. One thereby derives a prediction for the leading order at which the branch points in each symbol term will arise, which can be compared to the expectation one has from approximating the Feynman integral itself in the same limit.

\section{Asymptotic Behavior of Feynman Integrals with Numerators}
\label{appendix:asymptotics_with_numerators}

In this appendix we derive a generalized formula for the Landau exponent of singular limits in which the numerator of a Feynman integral vanishes. That is, we consider generic integrals of the form
\begin{equation}
    \label{eq:pham_integral}
    \I (p) = \int_{h_k} \,\, \omega \,\, \,\,
    \frac{\prod_{j = 1}^\mu (-s_j(k, p))^{-\nu_j}}{\prod_{i = \mu + 1}^m s_i(k, p)^{\nu_i}},
\end{equation}
where $s_1, \ldots, s_m$ are surfaces that take part in a simple pinch, and $\omega$ collects the integration measure over the loop momenta $k$ and any additional factors that do not participate in the pinch. Our goal will be to find the expansion of this integral near the kinematic hypersurface $\varphi=0$ on which this pinch occurs. We thus allow the contour over the loop momenta $h_k$ to be arbitrary, beyond requiring that it contains the integration region in which the pinch occurs. We also assume that $\nu_j<0$ for $j \in \{1, \ldots, \mu\}$ so that these $s_j$ can be thought of as numerators. Intuitively, the presence of these numerator factors that vanish at the location of the pinch are expected to dampen the singular behavior of $\I(p)$ on $\varphi = 0$. 

We start by combining the factors $s_1, \ldots, s_m$ into a single denominator factor. To do so, we need separate formulas for the factors that appear in the numerator and denominator. For the denominators, we can use
\begin{equation}
    \frac{1}{x^\nu} = \frac{1}{\Gamma(\nu)} \int_0^\infty \d \alpha \,  \alpha^{\nu - 1} e^{-x \alpha},
    \label{eq:exp1}
\end{equation}
which is a consequence of a simple change of variables.  This formula is valid for $\Re x > 0$ and $\Re \nu > 0$, since otherwise we have a divergence either at $\alpha \to \infty$ or $\alpha \to 0$. For the numerators, on the other hand, we find the following formula useful:
\begin{equation}
  \int_{\lvert \alpha \rvert = 1} e^\alpha \alpha^{-\nu - 1} \d \alpha = \frac {2 \pi i}{\nu!},
\end{equation}
which can be shown by the residue theorem.  It implies that 
\begin{equation}
  (-x)^{-\nu} = \frac{(-\nu)!}{2 \pi i} \int_{\lvert \alpha \rvert = 1} \d \alpha \, \alpha^{\nu - 1}  e^{-\alpha x}  ,
  \label{eq:exp2}
\end{equation}
for $\nu \in \mathbb{Z}_{\leq 0}$.

Making use of these formulas, we find
\begin{align}
    & \frac{\prod_{i = 1}^\mu (-s_i)^{-\nu_i}}{\prod_{i = \mu + 1}^m s_i^{\nu_i}} = \frac{\prod_{i = 1}^\mu (-\nu_i)!}{(2 \pi i)^\mu \prod_{i = \mu + 1}^m \Gamma(\nu_i)} \\ & \hspace{2.2cm} \times \int \prod_i \alpha_i^{\nu_i - 1} e^{-\sum_i s_i \alpha_i} \prod_i \d \alpha_i.
    \nonumber
\end{align}
Performing an integration over one of the $\alpha_i$ variables, this becomes
\begin{multline}
 \frac{\prod_{i = 1}^\mu (-s_i)^{-\nu_i}}{\prod_{i = \mu + 1}^m s_i^{\nu_i}} = 
     \frac{\Gamma(\sum_{i = 1}^m \nu_i) \prod_{i = \mu + 1}^m s_i^{\nu_i}}{(2 \pi i)^\mu \prod_{i = \mu + 1}^m \Gamma(\nu_i)} 
     \\ \times \int_{h_\alpha}  \!\!\!\! \frac{\delta( \alpha_m - 1)\prod_i \alpha_i^{\nu_i - 1} \d \alpha_i}{\bigl(\sum_{i=1}^m \alpha_i s_i\bigr)^{\sum_i \nu_i}},
\end{multline}
where $h_\alpha$ is the contour specified by~\eqref{eq:exp1} and~\eqref{eq:exp2}, namely a product of $\mu$ circles with $m - \mu$ copies of $\mathbb{R}_+$.

Plugging this result into equation~\eqref{eq:pham_integral} we find
\begin{multline}
    \I (p) = \frac {\Gamma(\nu) \prod_{i=1}^\mu (-\nu_i)!}{(2 \pi i)^\mu \prod_{i = \mu + 1}^m \Gamma(\nu_i)} \\ \times  \int_{h'_\alpha \times h_k} \!\!\!
     \frac{\d^{m - 1} \alpha \, \omega \prod_{j=1}^m \alpha_j^{\nu_j-1}}{\left[ \sum_{i = 1}^m \alpha_i s_i\right]^\nu},\label{eq:I_feyn_param_with_num}
\end{multline}
where $h'_\alpha$ is obtained from $h_\alpha$ by setting $\alpha_m = 1$ and we have defined $\nu = \sum_{i = 1}^m \nu_i$. To simplify our notation, we define the function $M(k,p,\alpha) = \sum_{j = 1}^m \alpha_j s_j(k,p)$, where we leave implicit the fact that we have already fixed one of the $\alpha$ variables using the delta function (which we have taken to be $\alpha_m$ without loss of generality). 

The nature of the singularity near $\varphi = 0$ will be determined by the behavior of $\I(p)$ near the critical point $(k^\ast, p^\ast, \alpha^\ast)$ where the pinch occurs. Correspondingly, we expand the denominator around this point. Since, by assumption, all of the factors $s_j$ in $M$ take part in the pinch, all of the terms in this expansion that are linear with respect to one of the integration variables will vanish. The behavior of the integrand near the critical point can thus be captured by expanding $M$ to quadratic order:\footnote{Here we assume that the singularity can be reached with all of the differences $\delta \alpha_i$ and $\delta k_j$ vanishing at the same rate. This assumption is further discussed and justified in~\cite{Landau:1959fi,PolkinghorneScreaton,pham,BSMF_1959__87__81_0}.}
\begin{multline}
 M(k, p, \alpha) =
 M^\ast + 
  \sum_{i = 1}^m \alpha_{i}^\ast \Big( \frac {\partial s_i}{\partial p_a} \Big)^\ast \delta p_a \\
  + \frac 1 2 \sum_{i = 1}^m \alpha_{i}^\ast \Big( \frac {\partial^2 s_i}{\partial p_a \partial p_{a'}} \Big)^\ast \delta p_a \delta p_{a'} 
   + \sum_{i = 1}^{m - 1} \left( \frac {\partial s_i}{\partial k_j} \right)^\ast \delta \alpha_i \delta k_j \\
  + \frac 1 2 \sum_{i = 1}^m \alpha_{i}^\ast \Big( \frac {\partial^2 s_i}{\partial k_j \partial k_{j'}} \Big)^\ast \delta k_j \delta k_{j'}
  + \cdots \label{eq:M_expansion}
\end{multline}
Here, we have labeled the quantities that are evaluated at the critical point with superscript $\ast$, and have defined $\delta q = q - q^\ast$ for $q \in \{k,p,\alpha\}$. Repeated momentum indices should be understood to indicate a sum over all momenta of the indicated type, as well as component-wise contraction via the spacetime metric.  %

We can rewrite the expansion in~\eqref{eq:M_expansion} more schematically as
\begin{equation} M = \ell(p) + \frac 1 2 M_{A B}(\xi^\ast) \xi_A \xi_B + \cdots, \label{eq:M_expansion_xi}
\end{equation}
where we have defined 
\begin{align}
  \ell(p) & = \sum_{i = 1}^m \alpha_{i}^\ast \Big( \frac {\partial s_i}{\partial p_a} \Big)^\ast \delta p_a  \\ & +
    \frac 1 2 \sum_{i = 1}^m \alpha_{i}^\ast \Big( \frac {\partial^2 s_i}{\partial p_a \partial p_{a'}} \Big)^\ast  \delta p_a \delta p_{a'}+ \cdots
    \nonumber \, ,
\end{align}
and used the fact that $M^\ast = 0$ (since $s_i^\ast = 0$ for all $i$). The indices $A$ and $B$ run from $1$ to $n + m - 1$, while $\xi = (\alpha_1, \dotsc, \alpha_{m - 1}, k_1, \dotsc, k_n)$ is a vector that collects together all integration variables (so $n = D L$ if we are expanding around all the integration variables corresponding to $L$ loop momenta in $D$ dimensions). The symmetric matrix 
\begin{equation}
 M_{AB} =
  \begin{pmatrix} 0 & \frac {\partial s_A}{\partial \xi_B} \\
    \frac {\partial s_B}{\partial \xi_A} & \ \ \sum_{i = 1}^m \alpha_{i}^\ast \frac {\partial^2 s_i}{\partial \xi_A \partial \xi_B}
  \end{pmatrix}
\end{equation}
is just the Hessian, in which we have removed the row and column corresponding to $\alpha_m$. 

Evaluating $\I(p)$ to leading order near the critical point thus  reduces to evaluation of an integral of the form
\begin{equation}
  \label{eq:quadratic_int}
  \int_{h_\xi} \frac {\d^p \xi}{(c^2 + \xi^T D \xi)^q}\, .
\end{equation}
When $D$ is a positive-definite $p \times p$ symmetric matrix, the contour $h_\xi$ will be defined by $\xi^t D \xi \leq R^2$, for some small constant $R$. Moreover, since $D$ is symmetric, it can be diagonalized to define $\smash{D = U^T \operatorname{diag}(\sigma_1, \dotsc, \sigma_p) U = V^T V}$, where $\sigma_1,\dots,\sigma_p$ are the eigenvalues of $D$, $U$ is an orthogonal matrix and $\smash{V = \operatorname{diag}(\sqrt{\sigma_1}, \dotsc, \sqrt{\sigma_p}) U}$.  Then $\smash{\det D = (\det V)^2}$ so $\smash{\sqrt{\det D} = \det V}$.  We can thus make a change of coordinates $\xi = V^{-1} y$, upon which the domain of integration $h_y$ will be given by the equation $\sum_{i = 1}^p y_i^2 \leq R^2$.  We then have
\begin{equation}
  \int_{h_\xi} \frac {\d^p \xi}{(c^2 + \xi^T D \xi)^q} =
  \frac 1 {\sqrt{\det D}} \int_{h_y} \frac {\d^p y}{(c^2 + \sum_{i = 1}^n y^2)^q}.
\end{equation}
Going to spherical coordinates, we obtain
\begin{equation}
  \int_{h_\xi} \frac {\d^p \xi}{(c^2 + \xi^T D \xi)^q} = \frac {S_{p - 1}}{\sqrt{\det D}} \int_0^\rho \frac {r^{p - 1} \d r}{(c^2 + r^2)^q},
\end{equation}
where $S_{p - 1}$ is the area of the $(p{-}1)$-dimensional sphere. 
As long as $2 q - p > 0$ the integral is convergent as $c \to 0$.  In this limit,
we make a change of variables $r \to c\tau$ and obtain
\begin{equation}
  \frac {S_{p - 1} c^{\, p - 2 q}}{\sqrt{\det D}} \int_0^\infty \frac {\tau^{p - 1} \d \tau}{(1 + \tau^2)^q} =
  \frac {c^{\, p - 2 q} \pi^{\frac p 2} \Gamma(q - \frac p 2)}{\Gamma(q) \sqrt{\det D}},
\end{equation}
where we have used that the area of the sphere can be written as $\smash{S_{p-1} =  {2 \pi^{\frac {p} 2}}/{\Gamma(\frac {p} 2)}}$.
Altogether, then, we have that the leading non-analytic contribution to the integral in~\eqref{eq:quadratic_int} goes as 
\begin{equation}
  \int \frac {\d^p \xi}{(c^2 + \xi^T D \xi)^q} \sim \frac {c^{\, p - 2 q} \pi^{\frac p 2} \Gamma(q - \frac p 2)}{\Gamma(q) \sqrt{\det D}} \, ,
\end{equation}
in the $c \to 0$ limit, as long as $D$ as symmetric and positive-definite.

Applying this result to~\eqref{eq:I_feyn_param_with_num}, we find that the leading non-analytic contribution to $\I(p)$ as $\ell(p) \to 0$ is given by 
\begin{multline} \label{eq:landau_exponent_final}
  \I (p) \sim  \frac {\prod_{i = 1}^m {\alpha_{i}^\ast}^{\nu_i - 1}}{(2 \pi i)^\mu \prod_{i = \mu + 1}^m  \Gamma(\alpha_i)}  \\  \times \pi^{\frac {n + m - 1} 2} \Gamma(\alpha - \tfrac{n + m - 1} {2} )  \ell(p)^{\frac {n + m - 1} 2 - \sum_{i=1}^m \nu_i}.
\end{multline}
where we have replaced the $\alpha_i$ variables in the numerator by the value they take at the critical point, which is valid at leading order. %

The key piece of information that in~\eqref{eq:landau_exponent_final} is the exponent of $\ell(p)$, which tells us how $\I(p)$ behaves near the singularity at $\varphi = \ell(p) \to 0$. Namely, the Landau exponent for this singularity is given by
\begin{equation}
\label{eq:landau_exp2}
    \gamma = \frac{n+m-1}{2} - \sum_{i=1}^m \nu_i \,.
\end{equation}
where we recall that $n$ is the number of integrals coming from the loop momenta (which is usually $D L$), $m$ is the original number of numerator and denominator factors in~\eqref{eq:pham_integral}, and $\nu_i$ are the (inverse) powers to which these factors were raised. A different proof of this formula was given by Pham in~\cite{pham1968singularities} and in section 2.1 of~\cite{pham2011singularities}, but our proof has the benefit of being conceptually simpler.

\section{Absence of mixed second-type singularity in \texorpdfstring{$\D=4$}{D=4}}
\label{app:second-type}
In this appendix, we identify the singularity at $s t^2-2 s t m^2+s m^4-4 t^2 m^2=0$ as a mixed second-type singularity, meaning that it comes from an integration region in which one of the loop momenta goes to infinity while the other one stays finite. Moreover, we show that this singularity does not appear in strictly four dimensions, consistent with what was observed in the expansion of this integral around four dimensions in~\cite{Caron-Huot:2014lda}.

We begin by considering the double box in dual coordinates~\cite{Broadhurst:1993ib,Drummond:2007aua}, using the labeling
\begin{equation}
    \begin{gathered}
    \raisebox{0pt}[\height][\depth]{\hspace{-30pt}%
    \begin{tikzpicture}[scale=0.6, thick]
    \node at (-3,0) {$x_2$};
    \node at (5,0) {$x_4$};
    \node at (1,-2.5) {$x_1$};
    \node at (1,2.5) {$x_3$};
    \node at (0,0) {$x_A$};
    \node at (2,0) {$x_B$};
    \draw[line width=1.2] (-1,1) -- (3,1) -- (3,-1) -- (-1,-1) --cycle;
    \node[scale=0.7] at (0,1.55) {$\alpha_2$};
    \node[scale=0.7] at (2,1.55) {$\alpha_4$};
    \node[scale=0.7] at (0,-1.55) {$\alpha_1$};
    \node[scale=0.7] at (2,-1.55) {$\alpha_3$};
    \node[scale=0.7] at (-1.5,-0.5) {$\alpha_5$};
    \node[scale=0.7] at (1.4,-0.5) {$\alpha_6$};
    \node[scale=0.7] at (3.5,-0.5) {$\alpha_7$};
    \draw[dashed] (1,1) -- (1,-1);
    \draw[dashed] (-1,1) -- ++(135:1.8);
    \draw[dashed] (3,1) -- ++(45:1.8);
    \draw[dashed] (3,-1) -- ++(-45:1.8);
    \draw[dashed] (-1,-1) -- ++(-135:1.8);
    \end{tikzpicture}
    }
    \end{gathered}
\label{fig:dbox_labels}
\end{equation}
In these variables, the integral is given by
\begin{align}
\I (p_i) & = \int \frac {\d^\D x_A \d^\D x_B}{(x_{1A}^2{-}m^2)(x_{2A}^2{-}m^2)(x_{3A}^2{-}m^2)}
\nonumber
\\
& \times \frac{1}{(x_{3B}^2{-}m^2)(x_{4B}^2{-}m^2)(x_{1B}^2{-}m^2)x_{AB}^2} \,,
\end{align}
where we have used the standard shorthand notation $x_{ij}^2 \equiv (x_i-x_j)^2$. We now bring the point at infinity in the $x_B$ variable to zero through the change of variables
\begin{equation}
  y_B = \frac{x_B}{x_B^2} \,.
\end{equation}
This corresponds to making the replacements $\d^\D x_B \to \frac{\d^\D y_B}{(y_B^2)^\D}$, and $x_{Bi}^2 \to \frac{1}{y_B^2} - 2 \frac{y_B \cdot x_i}{y_B^2}+x_i^2$. Pulling a factor of $1/y_B^2$ out of the last four propagators, $\I (p_i)$ becomes
\begin{multline}
\I (p_i) = \int \frac {\d^\D x_A \d^\D y_B}{(y_B^2)^{\D-4}} \frac{1}{(x_{1A}^2{-}m^2)(x_{2A}^2{-}m^2)(x_{3A}^2{-}m^2)} \\ \times 
\frac{1}{[1 {-} 2 x_3 \cdot y_B {+} (x_3^2 {-} m^2) y_B^2][1 {-} 2 x_4 \cdot y_B {+} (x_4^2 {-} m^2) y_B^2]} \\ \times
\frac{1}{[1 {-} 2 x_1 \cdot y_B {+} (x_1^2 {-} m^2) y_B^2](1 {-} 2 x_A \cdot y_B {+} x_A^2 y_B^2)} \, ,
\label{eq:dualint_secondtype}
\end{multline}
where we highlight that the factor of $1/{(y_B^2)^{\D-4}}$ drops out in four dimensions.

As usual, to identify where this integral can become singular, we  scan over all possible tuples of denominator factors that can vanish and thereby pinch the integration contour. We find that the solution to the Landau equations that identifies the singularity at $s t^2-2 s t m^2+s m^4-4 t^2 m^2=0$ comes from the following Landau diagram:
\begin{equation}
    \begin{gathered}
    \raisebox{0pt}[\height][\depth]{\hspace{-30pt}%
    \begin{tikzpicture}[scale=0.5, thick]
    \draw [decorate,decoration={brace,amplitude=5pt,mirror}]
    (-2.5,1.8) -- (-2.5,-1.8);
    \node at (-4.5,0) {$s$};
    \draw [decorate,decoration={brace,amplitude=5pt,mirror}]
    (-1.8,-2.5) -- (1.8,-2.5);
    \node at (0,-4) {$t$};
    \draw[line width=1.2] (-1,1) -- (1,1);
    \draw[line width=1.2] (1,-1) -- (-1,-1) -- (-1,1);
    \node[] at (0,1.8) {$m$};
    \node[white] at (0,2.5) {$.$};
    \draw[dashed] (1,1) to[in=130,out=-130] (1,-1);
    \draw[line width=1.2] (1,1) to[in=50,out=-50] (1,-1);
    \draw[dashed] (-1,1) -- ++(135:1.5);
    \draw[dashed] (1,1) -- ++(45:1.5);
    \draw[dashed] (1,-1) -- ++(-45:1.5);
    \draw[dashed] (-1,-1) -- ++(-135:1.5);
    \end{tikzpicture}
    }
    \end{gathered}
\end{equation}
meaning that the denominators $(x_{1B}^2-m^2)$ and $(x_{3B}^2-m^2)$ do not take part in the pinch. Concretely, the on-shell conditions for this Landau diagram read
\begin{gather}
    x_{A2}^2 = x_{A4}^2 = m^2, \qquad
    y_B^2 = 0, \\
    1 - 2 x_A \cdot y_B + x_A^2 y_B^2 = 0, \\
    1 - 2 x_4 \cdot y_B + (x_4^2 - m^2) y_B^2 = 0.
\end{gather}
The Landau loop equations read
\begin{align}
    & \alpha_5 x_{A2} + \alpha_1 x_{A4} + \alpha_3 x_{A1} = \alpha_6 y_B, \\
    & \alpha_0 y_B + \alpha_6 (y_B x_A^2 - x_A) + \alpha_7 (y_B (x_4^2 - m^2) - x_4) = 0,
\end{align}
where the labelings of $\alpha_i$ are as in~\eqref{fig:dbox_labels}, and we have assigned $y_B$ a parameter $\alpha_0$.

To solve these equations, we start by dotting $y_B$ into the second equation. %
Using the on-shell conditions, this equation can be reduced to $\alpha_6 + \alpha_7 = 0$.  The second equation becomes
\begin{equation}
    y_B (\alpha_0 + \alpha_6 x_A^2 + \alpha_7 (x_4^2 - m^2)) = \alpha_6 x_{A4}.
\end{equation}
This implies that $x_{A4}^2 = 0$, since $y_B^2 = 0$. Solving for $y_B$ and plugging into the first Landau equation we find that $\langle A 1 2 3 4\rangle = 0$ since these vectors are linearly dependent.  Here we have used the notation $\langle v_i v_j v_k v_l v_m\rangle$ for the oriented volume of the four-simplex with vertices of coordinates $v_i, v_j, v_k, v_l, v_m$.  It can be obtained as the $5 \times 5$ determinant of the five-vectors obtained by appending a component $1$ at the beginning of the four-vectors $v_i, \dotsc, v_m$.

Plugging in the fact that $x_{A3}^2 = 0$ in the expression for $\langle A 1 2 3 4\rangle$, we find
\begin{equation}
    \frac{1}{16} \, {\left(s t^2-2 s t m^2+s m^4-4 t^2 m^2\right)} s = 0 \, ,
\end{equation}
thereby identifying the singularity locus we were after. We highlight, however, that we needed to use the vanishing condition $y_B^2 = 0$ to derive this solution. As this denominator factor does not exist in strictly four dimensions (due to dual conformal invariance), this singularity also does not exist in four dimensions.

\section{Constructing Algebraic Symbol Letters}%
\label{sec:symbols-from-sqrt}
In this appendix, we illustrate how one can systematically search for algebraic symbol letters that are consistent with one's expectations from Landau analysis. We focus on the example of the bubble integral in two dimensions, whose predicted singular behavior is cataloged in Table~\ref{tab:bootstrap_bubble}. 
Given that the only two expected square root branch points in this integral arise at the threshold and pseudothreshold, we are led to consider a symbol-letter ansatz of the form
\begin{multline}
L= P_1(s, m_1, m_2) \sqrt{s - (m_1 + m_2)^2} \\  + P_2(s, m_1, m_2) \sqrt{s + (m_1 + m_2)^2},
\end{multline}
where $P_1$ and $P_2$ are polynomials. What we will show in this Appendix is that for the logarithmic singularities ($L=0$) to be at the allowed locations ($m_1^2=0$ or $m_2^2=0$ from Table~\ref{tab:bootstrap_bubble}), the only independent letters are
\begin{equation}
     L = \sqrt{s - (m_1 + m_2)^2}  \pm \sqrt{s + (m_1 + m_2)^2}
\end{equation}
That is, the only possibility for the polynomials $P_1$ and $P_2$ is that they are $\pm 1$. 

To begin, we note that
\begin{align}
  & \Bigl(P_1(s, m_1, m_2) \sqrt{s - (m_1 + m_2)^2}
  \nonumber
  \\ & \qquad + P_2(s, m_1, m_2) \sqrt{s + (m_1 + m_2)^2}\Bigr) 
  \nonumber
  \\
  & \times \Bigl(P_1(s, m_1, m_2) \sqrt{s - (m_1 + m_2)^2}
  \label{eq:product-bubble-ansatz}
  \\
  & \qquad
  - P_2(s, m_1, m_2) \sqrt{s + (m_1 + m_2)^2}\Bigr)
  \nonumber
  \\
  & = P_1^2 (s - (m_1 + m_2)^2) - P_2^2 (s - (m_1 - m_2)^2).
  \nonumber
\end{align}
If this combination is a nontrivial polynomial in $s$,  we will encounter logarithmic singularities at the roots of that polynomial. Since no such singularities are expected, we should obtain a result that does not depend on $s$. Conversely, the right hand side of equation~\eqref{eq:product-bubble-ansatz} may depend on $m_1$ and $m_2$, since we do expect logarithmic singularities when $m_1 = 0$ or $m_2 = 0$ (and also when $m_1 \to \infty$ and $m_2 \to \infty$).  Since there are no other singularities at special values of the masses, we deduce that this polynomial should be a product of powers of $m_1$ and $m_2$. By symmetry, these powers should be equal.

Looking at the expression in the right hand side of equation~\eqref{eq:product-bubble-ansatz}, we are led to ask the following mathematical question.  Given $\alpha, \beta, \gamma \in \mathbb{C}$ (or in some polynomial ring such as $\mathbb{C}[m_1, m_2]$), when can we find two polynomials $p, q$ such that
\begin{equation}
  p^2 (x - \alpha) - q^2 (x - \beta) = \gamma .
\end{equation}
If $p, q$ are of degree zero, we have the obvious solution $p = q = 1$ and $\gamma = \beta - \alpha$. If $p = x + p_0$ and $q = x + q_0$, we have
\begin{gather}
  p_0 = -\frac {\alpha + 3 \beta} 4, \qquad
  q_0 = -\frac {3 \alpha + \beta} 4, \qquad
  \gamma = -\frac 1 {16} (\alpha - \beta)^2.
\end{gather}
If $p = x^2 + p_1 x + p_0$ and $q = x^2 + q_1 x + q_0$, then we find
\begin{gather}
  p_1 = -\frac 3 4 \alpha - \frac 5 4 \beta, \qquad
  p_0 = \frac 1 {16} \alpha^2 + \frac 5 8 \alpha \beta + \frac 5 {16} \beta^2, \\
  q_1 = -\frac 5 4 \alpha - \frac 3 4 \beta, \qquad
  q_0 = \frac 5 {16} \alpha^2 + \frac 5 8 \alpha \beta + \frac 1 {16} \beta^2, \\
  \gamma = -\frac 1 {256} (\alpha - \beta)^5.
\end{gather}
In all these cases we have a unique solution.  We will provide an interpretation of this solution below.

Consider now
\begin{equation}
  (\sqrt{x - \alpha} \pm \sqrt{x - \beta})^{2 n + 1} = p_n \sqrt{x - \alpha} \pm q_n \sqrt{x - \beta},
\end{equation}
where
\begin{multline}
  (\sqrt{x - \alpha} \pm \sqrt{x - \beta})^{2 n + 1} = \\
  \sum_{k = 0}^{2 n + 1} \binom{2 n + 1}{k} (x - \alpha)^{\frac k 2} (\pm 1)^{k - 1} (x - \beta)^{\frac {2 n + 1 - k} 2} = \\
  \pm \sqrt{x - \alpha} \sum_{k = 0}^n \binom{2 n + 1}{2 k} (x - \alpha)^k (x - \beta)^{n - k} + \\
  \sqrt{x - \beta} \sum_{k = 0}^n \binom{2 n + 1}{2 k + 1} (x - \alpha)^k (x - \beta)^{n - k}.
\end{multline}

We therefore have
\begin{gather}
  p_n(x) = \pm \sum_{k = 0}^n \binom{2 n + 1}{2 k} (x - \alpha)^k (x - \beta)^{n - k}, \\
  q_n(x) = \sum_{k = 0}^n \binom{2 n + 1}{2 k + 1} (x - \alpha)^k (x - \beta)^{n - k}.
\end{gather}

With this definition we indeed have
\begin{equation}
  p_n^2 (x - \alpha) - q_n^2 (x - \beta) = (\beta - \alpha)^{2 n + 1}.
\end{equation}

We thus find
\begin{gather}
  p_1(x) = 4 x - 3 \alpha - \beta, \\
  q_1(x) = 4 x - \alpha - 3 \beta, \\
  p_2(x) = 16 x^2 - 4 (5 \alpha + 3 \beta) x + (5 \alpha^2 + 10 \alpha \beta + \beta^2), \\
  q_2(x) = 16 x^2 - 4 (3 \alpha + 5 \beta) x + (\alpha^2 + 10 \alpha \beta + 5 \beta^2).
\end{gather}
If we normalize the $p_n$ such that their leading coefficient is 1, we obtain the same solution as before.

We now proceed to show in general that a solution to the constraints formulated above with non-trivial polynomials $p$ and $q$ is a power of the solution obtained for $p = q = 1$.  Therefore, such candidate symbol letters are not independent and can be discarded.

We start with a key Lemma:
\begin{lemma}%
  \label{lem:invertible}
  Given $f$ a Laurent polynomial $f \in \mathbb{C}[u, u^{-1}]$, if $f$ is invertible, then there exists $k \in \mathbb{Z}$ such that $f(u) = f_k u^k$.
\end{lemma}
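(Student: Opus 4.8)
The plan is to establish the structure of invertible elements in the Laurent polynomial ring $\mathbb{C}[u,u^{-1}]$ by reducing to the well-known case of the ordinary polynomial ring. First I would clear denominators: any $f\in\mathbb{C}[u,u^{-1}]$ can be written as $f(u)=u^{-N}g(u)$ for some $N\in\mathbb{Z}_{\ge 0}$ and some genuine polynomial $g\in\mathbb{C}[u]$ with $g(0)\neq 0$ (pull out the lowest-degree monomial). If $f$ is invertible in $\mathbb{C}[u,u^{-1}]$, write its inverse similarly as $f^{-1}(u)=u^{-M}h(u)$ with $h\in\mathbb{C}[u]$, $h(0)\neq 0$. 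Then $ff^{-1}=1$ gives $g(u)h(u)=u^{M+N}$ as an identity in $\mathbb{C}[u]$.

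The key step is then purely about the polynomial ring: in $\mathbb{C}[u]$, which is an integral domain with unique factorization (and in which $u$ is prime), the equation $g(u)h(u)=u^{M+N}$ forces both $g$ and $h$ to be monomials in $u$ times units. Concretely, comparing degrees gives $\deg g+\deg h=M+N$, and since $g(0)\neq 0$ and $h(0)\neq 0$, neither $g$ nor $h$ is divisible by $u$; but their product is $u^{M+N}$, which is divisible by $u^{M+N}$. By unique factorization the only way this is consistent is $\deg g=\deg h=0$, i.e.\ $g$ and $h$ are nonzero constants. Alternatively, and perhaps more cleanly, one argues that $u \nmid g$ and $u \nmid h$ forces $M+N=0$ directly, whence $g,h$ are constants. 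Feeding this back, $f(u)=u^{-N}g = g\, u^{-N}$ is a single monomial $f_k u^k$ with $k=-N$ and $f_k=g\in\mathbb{C}^\times$, which is exactly the claim.

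I do not anticipate a genuine obstacle here; the lemma is essentially the computation of the unit group $(\mathbb{C}[u,u^{-1}])^\times = \mathbb{C}^\times \times u^{\mathbb{Z}}$, and all the content is the factoriality of $\mathbb{C}[u]$ together with the observation that a Laurent polynomial and its lowest monomial differ by a unit. The only point requiring a little care is bookkeeping with the sign of exponents when passing between $\mathbb{C}[u,u^{-1}]$ and $\mathbb{C}[u]$, and making sure the ``$g(0)\neq 0$'' normalization is used to exclude spurious factors of $u$; this is routine. One could also phrase the whole argument via the substitution homomorphism, noting that any ring homomorphism from $\mathbb{C}[u,u^{-1}]$ must send $u$ to a unit, but the direct factorization argument above is the most transparent and is what I would write out.
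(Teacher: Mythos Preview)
Your argument is sound, modulo one small slip: writing $f=u^{-N}g$ with $g\in\mathbb{C}[u]$ and $g(0)\neq 0$ does not always allow $N\in\mathbb{Z}_{\ge 0}$ (take $f=u^5$, which forces $N=-5$). You should simply let $N\in\mathbb{Z}$ be minus the lowest exponent of $f$; the rest goes through unchanged, since $gh=u^{M+N}$ with $g,h\in\mathbb{C}[u]$ first forces $M+N\ge 0$, and then $u\nmid g$, $u\nmid h$ together with primality of $u$ force $M+N=0$. You already flag exactly this exponent bookkeeping as the only delicate point, so this is not a real gap.

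The paper takes a different, more self-contained route: it argues directly with the extremal degrees of the Laurent polynomials. Writing $f=\sum_{k=k_L}^{k_U}f_ku^k$ and the inverse as $g=\sum_{l=l_L}^{l_U}g_lu^l$, the top and bottom degrees of $fg$ are $k_U+l_U$ and $k_L+l_L$ (with nonzero coefficients, since $\mathbb{C}$ has no zero divisors). From $fg=1$ both must equal zero, and combined with $k_U\ge k_L$, $l_U\ge l_L$ this forces $k_U=k_L$ and $l_U=l_L$, so $f$ is a single monomial. Your reduction to unique factorization in $\mathbb{C}[u]$ is the standard algebraic computation of $(\mathbb{C}[u,u^{-1}])^\times$ and generalizes immediately to $R[u,u^{-1}]$ for any UFD $R$; the paper's degree argument is more elementary, avoiding factoriality altogether and needing only that $\mathbb{C}$ is an integral domain.
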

\begin{proof}
  Let us take
\begin{equation}
    f = \sum_k f_k u^k = f_{k_L} u^{k_L} + \cdots + f_{k_U} u^{k_U},
\end{equation}
  with $k_U \geq k_L$.  Since $f$ is invertible, there exists $g \in \mathbb{C}[u, u^{-1}]$ such that $f g = 1$.  If
\begin{equation}
    g = \sum_l g_l u^l = g_{l_L} u^{l_L} + \cdots g_{l_U} u^{l_U},
\end{equation}
  with $l_U \geq l_L$, then we have that the maximum degree of $f g$ is $k_U + l_U$ while the minimum degree is $k_L + l_L$.  Since $f g = 1$ we must have $k_U + l_U = 0$ and $k_L + l_L = 0$.  In particular, we have $k_U + l_U = k_L + l_L$.

  From the conditions $k_U \geq k_L$ and $l_U \geq l_L$ on the degrees above, we have $k_U + l_U \geq k_L + l_L$.  Since, as we have shown, $k_U + l_U = k_L + l_L$, we must have $k_U = k_L$ and $l_U = l_L$.  This means that the sums defining $f$ and $g$ reduce to a single term.
\end{proof}

\begin{lemma}%
  \label{lem:solution}
  If $p, q \in \mathbb{C}[u, u^{-1}]$ and
  \begin{equation}
    p(u) (u + u^{-1}) + q(u) (u - u^{-1}) = u^{-(2 n + 1)},
  \end{equation}
  such that $p(u) = p(u^{-1})$ and $q(u) = q(u^{-1})$, then
  \begin{align}
    p(u) & = \frac {u^{2 n + 1} + u^{-(2 n + 1)}}{2 (u + u^{-1})}
    \nonumber
    \\ 
    & = \frac 1 2 (u^{2 n} - u^{2 n - 2} + \cdots + u^{-2 n}), \\
    \nonumber
    q(u) & = -\frac {u^{2 n + 1} - u^{-(2 n + 1)}}{2 (u - u^{-1})}
    \\ 
    & = -\frac 1 2 (u^{2 n} + u^{2 n - 2} + \cdots + u^{-2 n}).
  \end{align}
\end{lemma}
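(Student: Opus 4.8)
The plan is to split the argument into an existence check---verifying that the displayed $p$ and $q$ really are symmetric Laurent polynomials solving the equation---and a uniqueness argument showing no other pair can work. For existence I would first record the divisibilities $(u^2+1)\mid(u^{4n+2}+1)$ and $(u^2-1)\mid(u^{4n+2}-1)$, which follow from the standard factorizations $y^{2n+1}\pm 1=(y\pm 1)(\cdots)$ applied with $y=u^2$. Multiplying numerator and denominator of the displayed formulas by a suitable power of $u$, these divisibilities show that $p(u)=\frac{u^{2n+1}+u^{-(2n+1)}}{2(u+u^{-1})}$ and $q(u)=-\frac{u^{2n+1}-u^{-(2n+1)}}{2(u-u^{-1})}$ genuinely lie in $\mathbb{C}[u,u^{-1}]$, and carrying out the division reproduces the alternating and non-alternating sums quoted in the statement. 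Both are manifestly invariant under $u\mapsto u^{-1}$, and substituting them into the left-hand side gives $\frac{1}{2}(u^{2n+1}+u^{-(2n+1)})-\frac{1}{2}(u^{2n+1}-u^{-(2n+1)})=u^{-(2n+1)}$, so the equation holds.

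For uniqueness I would pass to the difference of two hypothetical solutions $(p_1,q_1)$ and $(p_2,q_2)$. Setting $P=p_1-p_2$ and $Q=q_1-q_2$, these are symmetric Laurent polynomials with $P(u)(u+u^{-1})+Q(u)(u-u^{-1})=0$. The ring-theoretic input is that $\mathbb{C}[u,u^{-1}]$ is a localization of $\mathbb{C}[u]$, hence a PID, and that $u+u^{-1}$ and $u-u^{-1}$ are coprime in it: they are associates of $u^2+1$ and $u^2-1$, whose difference $2$ is a unit, so their gcd is a unit. From $P(u+u^{-1})=-Q(u-u^{-1})$ together with coprimality, $u-u^{-1}$ divides $P$; writing $P=(u-u^{-1})S$ and cancelling the non-zero-divisor $u-u^{-1}$ then forces $Q=-(u+u^{-1})S$ for some $S\in\mathbb{C}[u,u^{-1}]$.

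The punchline is to feed the two symmetry constraints back in: applying $u\mapsto u^{-1}$ to $P=(u-u^{-1})S$ and using $P(u^{-1})=P(u)$ forces $S(u^{-1})=-S(u)$, while doing the same to $Q=-(u+u^{-1})S$ and using $Q(u^{-1})=Q(u)$ forces $S(u^{-1})=S(u)$; hence $S=-S$, so $S=0$ and $P=Q=0$, which is the claimed uniqueness. I expect this last step to be the main obstacle: a naive attempt to conclude $P=Q=0$ from the algebraic relation alone fails, since the homogeneous equation $P(u+u^{-1})+Q(u-u^{-1})=0$ does have nonzero solutions (for instance $P=(u-u^{-1})^2$, $Q=-(u^2-u^{-2})$, which is however \emph{not} invariant under $u\mapsto u^{-1}$), and it is precisely the simultaneous imposition of the two symmetries---one making $S$ even, the other odd---that collapses the solution space. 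A minor secondary point is to keep the divisibility and coprimality bookkeeping clean in the Laurent ring, which is most easily done by clearing powers of $u$ and arguing in $\mathbb{C}[u]$, or by invoking Lemma~\ref{lem:invertible} to pin down the units.
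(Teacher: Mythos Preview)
Your proof is correct but takes a genuinely different route from the paper. The paper's argument is a two-line linear-algebra trick: apply the substitution $u\mapsto u^{-1}$ directly to the equation, use the hypotheses $p(u^{-1})=p(u)$ and $q(u^{-1})=q(u)$, and observe that this produces a second linear equation $p(u)(u+u^{-1})-q(u)(u-u^{-1})=u^{2n+1}$; adding and subtracting the two equations isolates $p$ and $q$ immediately. Existence and uniqueness fall out simultaneously, with no need to invoke the PID structure of $\mathbb{C}[u,u^{-1}]$ or analyze the homogeneous equation.

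Your approach---verify the formulas by hand, then prove uniqueness by a coprimality and symmetry argument in the Laurent ring---is sound and perhaps more robust in situations where one does not already have a candidate solution in hand, but here it is working considerably harder than necessary. The paper's method exploits the specific symmetry of the problem (that the involution $u\mapsto u^{-1}$ fixes $p,q$ but flips the sign of $u-u^{-1}$) to turn one equation in two unknowns into a solvable $2\times 2$ system, whereas you are effectively re-deriving that the system is nonsingular via the gcd computation. Both reach the same conclusion; the paper's is shorter and more transparent about why the symmetry hypothesis is exactly what is needed.
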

\begin{proof}
  Putting $u \to u^{-1}$ in the formula in the statement we find
  \begin{equation}
    p(u^{-1}) (u^{-1} + u) + q(u^{-1}) (u^{-1} - u) = u^{2 n + 1}.
  \end{equation}
  Using $p(u^{-1}) = p(u)$ and $q(u^{-1}) = q(u)$ and adding to the equality above, we find
  \begin{equation}
    p(u) = \frac {u^{2 n + 1} + u^{-(2 n + 1)}}{2 (u + u^{-1})}.
  \end{equation}
  Subtracting instead of adding, we find
  \begin{equation}
    q(u) = \frac {-u^{2 n + 1} + u^{-(2 n + 1)}}{2 (u - u^{-1})}.
  \end{equation}
\end{proof}

\begin{theorem}
  Given $P, Q \in \mathbb{C}[x]$ and $\alpha, \beta, \gamma \in \mathbb{C}$, if
  \begin{equation}
    P^2(x) (x - \alpha) - Q^2(x) (x - \beta) = \gamma,
  \end{equation}
  then there exists $n \in \mathbb{Z}$ such that, up to a rescaling of $P$ and $Q$ by a constant
  \begin{equation}
    P(x) \sqrt{x - \alpha} \pm Q(x) \sqrt{x - \beta} = (\sqrt{x - \alpha} \pm \sqrt{x - \beta})^{2 n + 1}.
  \end{equation}
\end{theorem}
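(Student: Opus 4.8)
The plan is to uniformize the two square roots by a single rational parameter---turning the hypothesis $P^2(x)(x-\alpha)-Q^2(x)(x-\beta)=\gamma$ into the statement that a certain Laurent polynomial is a \emph{unit}---and then to invoke Lemma~\ref{lem:invertible}. Throughout I assume $\alpha\ne\beta$, which is the only case relevant to the applications; when $\alpha=\beta$ the relation collapses to $(P^2-Q^2)(x-\alpha)=\gamma$, which forces $\gamma=0$ and $P=\pm Q$ and must be stated (and disposed of) separately.

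First I would set $a=\sqrt{x-\alpha}$, $b=\sqrt{x-\beta}$, $d=\beta-\alpha\ne0$, and introduce $u=a+b$. From $a^2-b^2=d$ one gets $a-b=d\,u^{-1}$, and hence
\[
 a=\tfrac12\bigl(u+d\,u^{-1}\bigr),\qquad b=\tfrac12\bigl(u-d\,u^{-1}\bigr),\qquad x=\alpha+\tfrac14\bigl(u+d\,u^{-1}\bigr)^2 .
\]
So $a,b,x$ all lie in $R:=\mathbb{C}[u,u^{-1}]$---which is just the coordinate ring of the smooth affine conic $a^2-b^2=d$---and, since $x$ involves only even powers of $u$, so do $P(x)$ and $Q(x)$; therefore $P(x)\,a$, $Q(x)\,b$, and their sum involve only odd powers of $u$.

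Next I would study $F_{\pm}:=P(x)\,a\pm Q(x)\,b\in R$, for which $F_+F_-=P^2a^2-Q^2b^2=P^2(x-\alpha)-Q^2(x-\beta)=\gamma$. If $\gamma=0$, comparing the multiplicity of $(x-\beta)$ on the two sides of $P^2(x-\alpha)=Q^2(x-\beta)$ (even on the left, odd on the right unless $P=0$) forces $P=Q=0$, which contradicts the conclusion, so take $\gamma\ne0$. Then $F_+$ is a unit of $R$, so Lemma~\ref{lem:invertible} gives $F_+=c\,u^{k}$ with $c\ne0$, $k\in\mathbb{Z}$; and since $F_+$ only involves odd powers of $u$, $k$ is odd, say $k=2n+1$. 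Because $u=a+b$, this reads $P(x)\,a+Q(x)\,b=c\,(a+b)^{2n+1}$, and rescaling $P,Q$ by $c^{-1}$ gives $P\sqrt{x-\alpha}+Q\sqrt{x-\beta}=(\sqrt{x-\alpha}+\sqrt{x-\beta})^{2n+1}$; the sign-flipped identity follows on applying the involution $\sigma:u\mapsto d\,u^{-1}$ of $R$, which fixes $x$ (hence $P,Q$) and $a$ but sends $b\mapsto-b$ and $a+b\mapsto a-b$. The explicit shape of the resulting $P,Q$---the $\sigma$-even and $\sigma$-odd parts of $u^{2n+1}$ divided respectively by $a$ and $b$---is exactly what Lemma~\ref{lem:solution} records, the divisions being exact precisely because $k$ is odd (equivalently $u^{2}+d\mid u^{2k}+d^{k}\iff k$ odd).

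The only real step is the first one: recognizing that $u=\sqrt{x-\alpha}+\sqrt{x-\beta}$ rationalizes everything and reduces the claim to the elementary classification of units in $\mathbb{C}[u,u^{-1}]$ already proven in Lemma~\ref{lem:invertible}. After that the argument is short; the points requiring care are the parity bookkeeping that pins down the oddness of $k$, and the separate (easy) treatment of the degenerate loci $\alpha=\beta$ and $\gamma=0$.
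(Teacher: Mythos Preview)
Your proof is correct and follows essentially the same route as the paper: rationalize both square roots by a single Laurent parameter, factor the hypothesis as $F_+F_-=\gamma$, invoke Lemma~\ref{lem:invertible} to force $F_+$ to be a monomial, and use parity to pin the exponent to an odd integer. Your parametrization $u=\sqrt{x-\alpha}+\sqrt{x-\beta}$ differs from the paper's $x=\tfrac{\alpha+\beta}{2}+\tfrac{\beta-\alpha}{4}(u^2+u^{-2})$ only by the rescaling $u\mapsto\sqrt{\beta-\alpha}\,u$, and your explicit handling of the degenerate loci $\alpha=\beta$ and $\gamma=0$ is a welcome addition that the paper leaves implicit.
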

\begin{proof}
  The equality $P^2(x) (x - \alpha) - Q^2(x) (x - \beta) = \gamma$ can be written in a factorized form
  \begin{align}
    &\bigl(P(x) \sqrt{x - \alpha} + Q(x) \sqrt{x - \beta}\bigr) \\ &\quad \times
    \bigl(P(x) \sqrt{x - \alpha} - Q(x) \sqrt{x - \beta}\bigr)  = \gamma \, . \nonumber
  \end{align}
  We make the substitution
  \begin{equation}
    x = \frac {\alpha + \beta} 2 + \frac {\beta - \alpha} 4 (u^2 + u^{-2}).
  \end{equation}
  Using this we can rationalize the square roots
  \begin{gather}
    \sqrt{x - \alpha} = \frac {\sqrt{\beta - \alpha}} 2 (u + u^{-1}), \\
    \sqrt{x - \beta} = \frac {\sqrt{\beta - \alpha}} 2 (u - u^{-1}).
  \end{gather}
  We define
  \begin{equation}
    x(u) = \frac {\alpha + \beta} 2 + \frac {\beta - \alpha} 4 (u^2 + u^{-2}).
  \end{equation}
  If we denote $P(u) = P(x(u))$ and $Q(u) = Q(x(u))$, then the factorized form can be written
  \begin{multline}
    \big(P(u) (u + u^{-1}) + Q(u) (u - u^{-1})\bigr) \times \\
    \bigl(P(u) (u + u^{-1}) - Q(u) (u - u^{-1})\bigr) = \frac {4 \gamma}{\beta - \alpha}.
  \end{multline}
  Here $P(u)$ and $Q(u)$ are Laurent polynomials and since $x(u) = x(u^{-1})$ we have that $P(u) = P(u^{-1})$ and $Q(u) = Q(u^{-1})$.

  We can rescale $P$ and $Q$ such that the right hand side becomes unity.  Then, we can apply Lemma~\ref{lem:invertible} to conclude that there exists an integer $n$ such that
  \begin{equation}
    P(u) (u + u^{-1}) + Q(u) (u - u^{-1}) = \delta u^{2 n + 1}.
  \end{equation}
  The degree has to be odd since the degrees of monomials in $P$ and $Q$ are all even since $x$ depends only on even powers of $u$. Then, the conclusion follows by observing that
  \begin{gather}
    \sqrt{x - \alpha} + \sqrt{x - \beta} = u \sqrt{\beta - \alpha}, \\
    \sqrt{x - \alpha} - \sqrt{x - \beta} = u^{-1} \sqrt{\beta - \alpha}.
  \end{gather}
\end{proof}

Let us finish by highlighting that, although we started by focusing on the square roots that appear in the two-dimensional bubble integral, this proof holds in any situation in which a pair of roots $\sqrt{Q_1}$ and $\sqrt{Q_2}$ are expected to appear in a set of symbol letters. Namely, only a single independent symbol letter can be constructed that involves both of these roots. We expect that a similar proof strategy could also be used to establish a maximum number of independent letters that can be constructed out of a larger numbers of roots; however, we leave this question to future work.

\section{Maximal Cuts and Leading Singularities}\label{sec:maxcuts}

To each Feynman integral, we can associate a set of \emph{maximal cuts}. These maximal cuts are given by the absorption integrals that place a maximal number of propagators in the original diagram on shell (without over-constraining the loop momenta), by replacing the propagators by delta functions and fixing the direction of the energy flowing through the edge.
In favorable cases, these integrals can be computed in terms of elementary functions.  

The simplest cases arise when the number of integrations is equal to the number of cut propagators; then, the integral computation reduces to evaluating a Jacobian.
Feynman integrals may also have more propagators than there are integrations. In these cases, not all of propagators can be put on-shell, and there will be several maximal cuts.  
Finally, if there are more integrations than propagators, the maximal cut will itself be a non-trivial integral. In such cases, we can ask whether there exists a further sequence of residue contours that will allow us to evaluate the remaining integrals and get a nonzero answer. We call the expressions that remain after a maximum number of (nonzero) residues have been computed \emph{leading singularities}. In cases where all integrations can be evaluated in this way, the leading singularities will be algebraic functions of the external kinematics. Conversely, when obstructions occur to evaluating every integral as a residue, we learn about the types of special functions that may appear in the evaluation of the Feynman integral that go beyond $\d \log$ forms~\cite{Bourjaily:2022bwx}.

Let us analyze a simple situation one may encounter when computing leading singularities.  Assume that on-shell the calculation of the absorption integral reduces to computing
\begin{equation}
  \int_\gamma \frac {g(x) d x}{f(x)},
\end{equation}
where $f$ and $g$ are polynomials, and where the integration contour can be deformed to a linear combination (with integer coefficients) of contours going around the zeros of $f$.  If $\deg g > \deg f$ we compute the quotient $q$ and the remainder $r$ such that $g = q f + r$.  The integral becomes
\begin{equation}
  \int_\gamma (q + \frac r f) d x = \int_\gamma \frac {r d x} f,
\end{equation}
where we have used the fact that the integral of a polynomial along a closed contour vanishes.  If $\deg r = -1 + \deg f$, then we can write $r = \tilde{q} f' + \tilde{r}$, where $\tilde{q}$ is a number and $\tilde{r}$ is a polynomial of degree at most $-2 + \deg f$.  The integral of $\frac {f' d x}{f}$ is an integer multiple of $2 \pi i$.

For our purposes, the most interesting part of the integral comes from
\begin{equation}
  \int_\gamma \frac {\tilde{r}(x) d x}{f(x)}
\end{equation}
for $\deg \tilde{r} \leq -2 + \deg f$.  For now, we assume that $f$ has no repeated roots.  In that case, denoting by $\gamma_i$ a small contour that encircles one of $f$'s roots $x_i$ counterclockwise, we have
\begin{equation}
  \int_{\gamma_i} \frac {\tilde{r}(x) d x}{f(x)} = 2 \pi i \frac {\tilde{r}(x_i)}{f'(x_i)}.
\end{equation}
Therefore, the possible algebraic prefactors are (after dropping factors of $2 \pi i$)
\begin{equation}
  \frac {x_i^p}{f'(x_i)},
\end{equation}
for $p = 0, \dotsc, -2 + \deg f$, for $i = 1, \dotsc, \deg f$.  However, not all these quantities are independent.  Indeed, by the residue theorem we have
\begin{equation}
  \sum_{i = 1}^n \frac {x_i^p}{f'(x_i)} = 0,
\end{equation}
for $p = 0, \dotsc, -2 + \deg f$.  In total we have $(-1 + \deg f)^2$ independent quantities.  This can be seen as a pairing of $-1 + \deg f$ homology classes with $-1 + \deg f$ cohomology classes.

Notice that the Landau analysis applies to this case in the sense that the Landau equations instruct us to solve $f(x) = 0$ and $f'(x) = 0$.  The solutions to the first equations are $x = x_i$.  Then the singularities occur at $f'(x_i) = 0$, exactly as derived above.

We can also analyze cases in which $f$ has repeated roots.  If $f$ has a repeated root, it has what is called a \emph{permanent pinch}. While it is possible that the integration contour will avoid the repeated root, there is the possibility of it being pinched for all values of the polynomial coefficients for which the double root exists.  Then, the integral can become divergent (although in higher dimensions the contour may be pinched while the integral is still convergent).

Consider for simplicity the case of a polynomial $f$ with a single root of multiplicity $k$ at $x = x_0$.  Then, we have
\begin{gather}
  f(x) = \frac 1 {k!} (x - x_0)^k f^{(k)}(x_0) + \cdots, \\
  \begin{split}\tilde{r}(x) = \tilde{r}(x_0) + (x - x_0) \tilde{r}'(x_0) + \cdots +\\ + \frac 1 {(k - 1)!} (x - x_0)^{k - 1} \tilde{r}^{(k - 1)}(x_0) + \cdots,\end{split}
\end{gather}
with $f^{(k)}(x_0) \neq 0$. We then have that
\begin{gather}
  \int_{\gamma_0} \frac {\tilde{r}(x) d x}{f(x)} = 2 \pi i k \frac {\tilde{r}^{(k - 1)}(x_0)}{f^{(k)}(x_0)}.
\end{gather}
We highlight that this case is \emph{not} covered by the usual Landau equations.

\section{Imposing Integrability}
\label{app:integrability}
To every iterated integral over $\d \log$ forms, we can associate a symbol
\begin{equation}
   \sum c_{i_1,i_2,\ldots,i_n} \, L_{i_1} \otimes
\cdots \otimes L_{i_n}
\end{equation}
that faithfully preserves the analytic structure of the original function, up to contributions proportional to transcendental constants. However, not every linear combination of symbol terms corresponds to a valid iterated integral. To see this, consider upgrading a candidate symbol that depends on two external variables $u$ and $v$ to an iterated integral, by integrating along some contour $\Gamma$ from an arbitrary base point $(u_0,v_0)$ to the current values $(u,v)$:
\begin{equation}
  f (u, v ; \Gamma) = c_{i_1,i_2,\ldots,i_n} \int_{\Gamma} d \ln L_{i_1} \circ \cdots
  \circ d \ln L_{i_n} \, .
  \end{equation}
For this to integrate into a well-defined function, the iterated integral should be independent of local deformations of the path. Path independence implies that $[\partial_u, \partial_v] f = 0$. As
derivatives only act on the last entry of the symbol, we have that
\begin{equation}
  \partial_u [L_{i_1} \otimes \cdots \otimes L_{i_n}] = (\partial_u \ln L_{i_n})
  [L_{i_1} \otimes \cdots \otimes L_{i_{n - 1}}].
\end{equation}
The next derivative then acts on either $\partial_u \ln L_{i_n}$ or $L_{i_{n - 1}}$.
Therefore,
\begin{equation}
  [\partial_u, \partial_v] S \otimes L_i \otimes L_j = J_{i j} S
\end{equation}
where $J_{i j} = \partial_u \ln L_i \partial_v \ln L_j - \partial_u \ln L_j
\partial_v \ln L_i$ is the Jacobian (that is, $d \ln L_i \wedge d \ln L_j =_{}
J_{i j} d u \wedge d v$). Thus we need to find linear combinations for which
$c_{i j} J_{i j} = 0$. Demanding commutation of additional derivatives in $u$
and $v$ implies that every pair of successive symbol letters should be
integrable.

There are many ways to solve the integrability constraints. Here is one
method, which we describe for the two-loop double box alphabet from Table~\ref{tab:double_box_letters}. We start by considering the space of possible weight-two symbols. To do so, we fist compute $J_{i j}$ for all pairs of
letters, and multiply by the least common multiple of the denominators that appear. This
makes all the $J_{i j}$ polynomials in $u, v, \beta_u, \beta_v$, and $\beta_{u v}$, where
each of the square roots appears only linearly. For the
double box letters, this gives rise to 110 possible independent monomials. We can thus think of the derivative operator $[\partial_u, \partial_v]$ as mapping each weight-two symbol term
$L_i \otimes L_j$ to a 110-dimensional vector, whose components correspond to the coefficients of these 110 monomials.  We want to find the linear relations between these vectors.  An easy way to find these relations is to put
the vectors in a matrix and perform Gaussian elimination. Any zero row in the
resulting matrix corresponds to a relation. Using this method, we find $109$
independent weight-two symbols.

To construct the integrable spaces of symbols at weights three and four, we can apply a similar algorithm. Denoting the 109 independent weight-two symbols by $S_k$, we can construct a candidate weight-three space by attaching all possible letters $L_i$ to either the beginning or end of each symbol $S_k$. The set of such objects $L_i \otimes
S_k$ are then integrable in the second two
entries, while the set of such objects $S_k \otimes L_j$ are integrable in the first two entries.
The space of integrable weight-three symbols are then given by the linear combinations of symbol terms live inside of both of these spaces. This intersection can be found in the same way as for weight
two: stack the vectors in a large matrix and reduce. The non-zero rows are a
basis for the union of the spaces and the zero rows are the intersection.
We find $859$ integrable weight-three symbols. Repeating the
procedure at weight four, we find $6993$ integrable combinations of the 12 letters, as reported in Table~\ref{tab:bootstrap_dbox}. All the other constraints can be efficiently solved by finding intersections of vector spaces in this manner.

\bibliographystyle{apsrev4-1}
\bibliography{Refs}

\end{document}